\numberwithin{equation}{section}
\numberwithin{figure}{section}
\theoremstyle{plain}
\newtheorem{thm}{\protect\theoremname}
\theoremstyle{definition}
\newtheorem{defn}[thm]{\protect\definitionname}
\theoremstyle{remark}
\newtheorem*{rem*}{\protect\remarkname}
\theoremstyle{plain}
\newtheorem{prop}[thm]{\protect\propositionname}
\theoremstyle{remark}
\newtheorem*{notation*}{\protect\notationname}
\theoremstyle{plain}
\newtheorem{fact}[thm]{\protect\factname}
\theoremstyle{plain}
\newtheorem{cor}[thm]{\protect\corollaryname}
\theoremstyle{remark}
\newtheorem{exam}{\protect\examplename}
\providecommand{\corollaryname}{Corollary}
\providecommand{\definitionname}{Definition}
\providecommand{\factname}{Fact}
\providecommand{\notationname}{Notation}
\providecommand{\propositionname}{Proposition}
\providecommand{\remarkname}{Remark}
\providecommand{\examplename}{Example}
\providecommand{\theoremname}{Theorem}
\begin{document}
  \title[Asymptotic analysis of probabilistic logic programming]
        {An asymptotic analysis of probabilistic logic programming, with implications for expressing projective families of distributions\thanks{We would like to thank Manfred Jaeger for his encouragement and for helpful conversations about the subject of this paper, and the anonymous reviewers for facilitating a clearer exposition of the material.}}

  \author[F. Q. Weitk\"amper]
         {FELIX Q. WEITK\"AMPER\\
         Institut f\"ur Informatik, LMU M\"unchen, Oettingenstr. 67, 80538 M\"unchen\\
         \email{felix.weitkaemper@lmu.de}}

\maketitle
\begin{abstract}
Probabilistic logic programming is a major part of statistical relational artificial intelligence, where approaches from logic and probability are brought together to reason about and learn from relational domains in a setting of uncertainty. However, the behaviour of statistical relational representations across variable domain sizes is complex, and scaling inference and learning to large domains remains a significant challenge.
In recent years, connections have emerged between domain size dependence, lifted inference and learning from sampled subpopulations. The asymptotic  behaviour  of statistical relational representations has come under scrutiny, and projectivity was investigated as the strongest form of domain-size dependence, in which query marginals are completely independent of the domain size. 

In this contribution we show that every probabilistic logic program  under the distribution semantics is asymptotically equivalent to an acyclic probabilistic logic program consisting only of determinate clauses over probabilistic facts.
We conclude that every probabilistic logic program inducing a projective family of distributions is in fact everywhere equivalent to a program from this fragment, and we investigate the consequences for the projective families of distributions expressible by probabilistic logic programs. 

To facilitate the application of classical results from finite model theory, we introduce the abstract distribution semantics, defined as an arbitrary logical theory over probabilistic facts. This bridges the gap to the distribution semantics underlying probabilistic logic programming. In this representation, determinate logic programs correspond to quantifier-free theories, making asymptotic quantifier elimination results available for the setting of probabilistic logic programming.

\end{abstract}

\section{Introduction: Projectivity and statistical relational artificial
intelligence}

Statistical relational artificial intelligence
has emerged over the last 25 years as a means to specify statistical
models for relational data. Since then, many different frameworks
have been developed under this heading, which can broadly be classified
into those who extend logic programming to incorporate probabilistic
information (probabilistic logic programming under the distribution
semantics) and those who specify an abstract template for probabilistic
graphical models (sometimes known as knowledge-based model construction). 

Both classes share the distinction between a general model (a template
or a probabilistic logic program with variables) and a specific domain
used to ground the model. Ideally, the model would be specified abstractly
and independently of a specific domain, even though a specific domain
may well have been involved in learning the model from data. 

However, a significant hurdle is the generally hard to predict or
undesirable behaviour of the model when
applied to domains of different sizes. This extrapolation problem has received
much attention in the past years \cite{PooleBKKN14,JaegerS20}.
Recently Jaeger
and Schulte \shortcite{JaegerS18,JaegerS20}
have identified \emph{projectivity} as a strong form of good scaling
behaviour: in a projective model, the probability of a given property
holding for a given object in the domain is completely independent
of the domain size. However, the examples of Poole et al. \shortcite{PooleBKKN14} show
that projectivity cannot be hoped for in general statistical relational
models, and Jaeger and Schulte \shortcite{JaegerS18} identify very restrictive fragments
of common statistical relational frameworks as projective.

The question remains, however, whether those fragments completely capture the projective families of distributions expressible by a statistical relational representation. 
We will show in this contribution that in the case of probabilistic
logic programming under the distribution semantics, this is true, as every projective
probabilistic logic program is equivalent to a determinate acyclic probabilistic
logic program. 

Our method will show that, moreover, every probabilistic
logic program is asymptotically equivalent to an acyclic determinate probabilistic
logic program. This result is of some independent interest, as it shows that the probabilities of queries expressed by a logic program converge as domain size increases. Moreover, the asymptotic equivalence provides an explicit representation using which the asymptotic query probabilities can be computed. 

This will be an application of an asymptotic quantifier elimination
result for probabilistic logic programming derived from classical
finite model theory, namely from the study of the asymptotic theory
of first-order and least fixed point logic in the 1980s (particularly 0-1 laws, applied in the form of Blass et al. \shortcite{BlassGK85}). 

This application is also methodologically interesting as it opens
another way in which classical logic can contribute to
cutting-edge problems in learning and reasoning. That the
theory developed around 0-1 laws would be a natural candidate for
such investigations may not surprise, as it is highly developed and
is itself in the spirit of ``finite probabilistic model theory''
\cite[Section 7]{CozmanM19}, and one might hope for more
cross-fertilisation between the two fields in the future.

\subsection{Outline of the paper}

We will first formally introduce the framework of families of distributions and the notion of projectivity that we will refer to throughout.  

In the following section, we present the abstract distribution semantics, which bridges the gap between the tools from finite model theory and the semantics of probabilistic logic programming.
We also discuss asymptotic quantifier elimination and introduce the main classical results from finite model theory. 

We introduce least fixed point logic, an adequate representation for (probabilistic) logic programs.
We then give the necessary background on the asymptotic behaviour of least fixed point logic. 
We harness the relationship between probabilistic logic programming and least fixed point distributions to show that every probabilistic logic program is asymptotically equivalent to an acyclic determinate probabilistic logic program.

In the following section, we will apply this analysis to study the projective families of distributions expressible in probabilistic logic programming. We see that every projective
logic program is actually everywhere equivalent to an acyclic determinate logic program, and we derive some properties for the projective distributions expressible in this way.
For the case of a unary vocabulary, we show that only very few projective families of distributions are expressible in probabilistic logic programming, and we give a concrete example to highlight that point.

Finally, we conclude the paper with a brief discussion of the complexity of asymptotic quantifier elimination and some impulses for further research.

Proofs to all the statements made here can be found in Appendix A in the supplementary material.

\subsection{Notation}
An introduction to the terminology of first-order logic used in this paper can be found in Appendix B.1, in the supplementary material.
We just point out here that we use $\mathfrak{P}(A)$ to
indicate the power set of a set $A$ and $\vec{x}$ as a shorthand for a finite tuple $x_1, \dots, x_n$ of arbitrary length. 
\subsection{Projectivity}

We will introduce projective families of distributions in accordance
with Jaeger and Schulte \shortcite{JaegerS18,JaegerS20}, where one can find a much more detailed
exposition of the terms and their background. As we are interested
in statistical relational representations as a means of abstracting
away from a given ground model, we will refer to families of distributions
with varying domain sizes.
\begin{defn}
A \emph{family of distributions} for a relational vocabulary $\mathcal{S}$
is a sequence $\left(Q^{(n)}\right)_{n\in\mathbb{N}}$ of probability
distributions on the sets $\Omega_{n}$ of all $\mathcal{S}$-structures
with domain $\left\{ 1,\ldots,n\right\} \subseteq\mathbb{N}$. 
\end{defn}

\begin{defn}
A family of distributions is called \emph{exchangeable }if every  $Q^{(n)}$ is invariant
under $\mathcal{S}$-isomorphism. 

It is called \emph{projective} if, in addition, for all $m<n\in\mathbb{N}$ and
all $\omega\in\Omega_{m}$ the following holds: 
\[
Q^{(m)}(\{\omega\})=Q^{(n)}\left(\left\{ \omega'\in\Omega_{n}|\textrm{\ensuremath{\omega} is the substructure of }\omega'\textrm{ with domain \ensuremath{\left\{  1,\ldots,n\right\} } }\right\} \right)
\]
 
\end{defn}

Projectivity encapsulates a strong form of domain size independence.
Consider, for instance, the query $R(x)$, where $R$ is a relation symbol in $\mathcal{S}$.
Then in an exchangeable family of distributions, the unconditional probability of $R(x)$ holding in a world is independent of the precise interpretation of $x$, and depends only on the domain size.
If the family of distributions is projective, then the probability of $R(x)$ is independent even of the domain size.  
As an immediate consequence, this implies that the computational complexity of quantifier-free queries is constant with domain size, since queries can always be evaluated in a domain consisting just of the terms mentioned in the query itself.
Projectivity also has important consequences for the statistical consistency of learning from randomly sampled subsets \cite{JaegerS18}.

An important class of examples of projective families of distributions are those in which $R(a)$ is independent of $P(b)$ for all $R,P,a,b$.
For instance, consider a vocabulary $\mathcal{S}$  with unary relations $P$ and $R$, and a family of distributions in which for every domain element $a$, $P(a)$ and $R(a)$ are determined independently with probabilities $p$ and $r$ respectively.
Then the probability that a subset $A$ of a domain $D$ has $\mathcal{S}$-structure $M$ is given by
  \[
  p^{\left| a \in M | M \models P(a) \right|} \cdot (1-p)^{\left| a \in M | M \models \neg P(a) \right|} \cdot r^{\left| a \in M | M \models R(a) \right|} \cdot (1-r)^{\left| a \in M | M \models \neg R(a) \right|}
  \]
regardless of the size of $D$.

The work of Jaeger and Schulte \shortcite{JaegerS20} provides a complete characterisation of projective families of distributions in terms of exchangeable arrays (AHK representations).
However, it is not clear how this representation translates to the statistical relational formalisms currently in use, such as probabilistic logic programming.
We will see below that there are indeed projective families of distributions that are not expressible by a probabilistic logic program.
Furthermore, Jaeger and Schulte \shortcite{JaegerS20} claimed in Proposition 7.1 of their paper an independence property for the subclass of AHK- distributions.
While this proposition proved to be incorrect for the class of AHK- distributions \cite{JaegerS20a}, we will see here that for a projective family of distributions induced by a probabilistic logic program, the independence property holds.

In the remainder of this paper, we will investigate the interplay
between the asymptotic behaviour of logical theories as they have
been studied in finite model theory and the families of distributions
that are induced by them. We therefore introduce a notion of asymptotic
equivalence of families of distributions.

\section{\label{sec:Asymptotic-Quantifier-Elimination}Abstract distribution semantics}

As a bridge between classical notions from finite model theory and probabilistic logic programming, we introduce the abstract distribution semantics.
It builds on the \emph{relational Bayesian network specifications} of Cozman and Maua \shortcite{CozmanM19}, which combine random and independent root predicates with non-root predicates that are defined by first-order formulas.
Here we streamline and generalise this idea to a unified framework that we call the \emph{abstract distribution semantics}.
In particular, we will generalise away from first-order logic (FOL) to a general \emph{logical language:}
\begin{defn}
Let $\mathcal{R}$ be a vocabulary. Then a \emph{logical language
}$L(\mathcal{R})$ consists of a collection of
functions $\varphi$  which take an $\mathcal{R}$-structure
$M$ and returns a subset of  $M^{n}$ for some $n\in\mathbb{N}$ (called
the \emph{arity} of $\varphi$).
In analogy to the formulas of first-order logic, we refer to those functions as \emph{$L(\mathcal{R})$-formulas} and write $M\models\varphi(\vec{a})$ whenever $\vec{a} \in  \varphi(M)$.
\end{defn}

The archetype of a logical language is the first-order predicate calculus, where an $R$-formula  $\varphi$ defines a function  $\varphi(M) := \{a \in M | M \models \varphi(a) \}$ and $\models$ is used in the sense of ordinary first-order logic.
The concept as defined here is sufficiently general to accommodate
many other choices, however, and we will later apply it to least fixed point logic in particular.

\begin{defn}
  Let $\mathcal{S}$ be a relational vocabulary,
  $\mathcal{R}\subseteq\mathcal{S}$, and let $L(\mathcal{R})$
be a logical language over $\mathcal{R}$. Then an \emph{abstract
$L$-distribution over $\mathcal{R}$ (with vocabulary $\mathcal{S}$)}
consists of the following data:

For every $R\in\mathcal{R}$ a number $q_{R}\in\mathbb{Q}\cap[0,1]$.

For every $R\in\mathcal{S}\backslash\mathcal{R}$, an $L(\mathcal{R})$-formula
$\phi_{R}$ of the same arity as $R$. 
\end{defn}

In the following we will assume that all vocabularies are finite.
The semantics of an abstract distribution is only defined relative
to a domain $D$, which we will also assume to be finite. The formal
definition is as follows:
\begin{defn}
Let $L(\mathcal{R})$ be a logical language over $\mathcal{R}$ and
let $D$ be a finite set. Let $T$ be an abstract $L$-distribution
over $\mathcal{R}$. Let $\Omega_{D}$ be the set of all $\mathcal{R}$-structures
with domain $D$.

Then the \emph{probability distribution on $\Omega_{D}$ induced by
$T$, written $Q_{T}^{(D)}$, }is defined as follows:

For all $\omega\in\Omega_{D}$, if $\exists_{\vec{a}\in\vec{D}}\exists_{R\in\mathcal{S} \setminus \mathcal{R}}:R(\vec{a})\nLeftrightarrow\phi_{R}(\vec{a})$,
then $Q_{T}^{(D)}(\{\omega\})\coloneqq0$

Otherwise, $Q_{T}^{(D)}(\{\omega\})\coloneqq\underset{R\in\mathcal{R}}{\prod}(q_{R}^{|\{\vec{a}\in\vec{D}|R(\vec{a})\}|})\times\underset{R\in\mathcal{R}}{\prod}(1-q_{R})^{|\{\vec{a}\in\vec{D}|\neg R(\vec{a})\}|}$
\end{defn}

In other words, all the relations in $\mathcal{R}$ are independent  
with probability $q_{R}$ and the relations in $\mathcal{S} \setminus \mathcal{R}$
are defined deterministically by the $L(\mathcal{R})$-formulas $\phi_{R}$.
We will illustrate that with an example.

\begin{exam}
  Let $\mathcal{R}=\{R,P\}$, $\mathcal{S} = \{R,P,S\}$, for  a unary relation $R$ a binary relation $P$ and a unary relation $S$.
  Then an abstract distribution over $\mathcal(R)$ has numbers $q_{R}$ and $q_{P}$ which encode probabilities. Consider the FOL-distribution $T$ with $\varphi_S = \exists_y \left( R(x) \wedge P(x,y) \right)$.
  For any domain $D$, $Q_{T}^{(D)}$ is obtained by making an independent choice of $R(a)$ or $\neg R(a)$ for every $a \in D$, with a  $q_{R}$ probability of $R(a)$. Similarly, an independent choice of $P(a,b)$ or $\neg P(a,b)$ is made for every pair $(a,b)$ from $D^2$, with a  $q_{P}$ probability of $P(a,b)$.
  Then, for any possible $\mathcal{R}$-structure, the interpretation of $S$ is determined by $\forall_x S(x) \leftrightarrow  \varphi_S(x)$.
  The resulting family of distributions is not projective, since the probability of $Q(a)$ increases with the size of the domain as more possible candidates $b$ for $P(a,b)$ are added. 

\end{exam}

As this example has shown, abstract FOL distributions do not necessarily give rise to projective families.
If the $\varphi$ are all given by quantifier-free formulas, however, then the induced families distributions are indeed projective. We call such abstract $L(\mathcal{R})$- distributions, in which $L(\mathcal{R})$ is the class of quantifier-free FOL-formulas over $\mathcal{R}$, \emph{quantifier-free distributions}.

\begin{prop}\label{prop:QF_implies_projective}
  Every abstract quantifier-free distribution induces a projective family of distributions.
\end{prop}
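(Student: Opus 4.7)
The plan is to prove projectivity directly from the definition, with the essential ingredient being that a quantifier-free formula $\phi_R(\vec{x})$ evaluated at a tuple $\vec{a}$ in an $\mathcal{R}$-structure $\omega'$ depends only on the atomic facts about tuples of elements drawn from $\vec{a}$ itself, hence only on the restriction $\omega'_{\mathcal{R}}\upharpoonright\{1,\ldots,m\}$ whenever $\vec{a}\in\{1,\ldots,m\}^k$. Exchangeability is then a quick preliminary: the probability formula in the definition depends only on the counts of positive and negative atomic facts among the $R\in\mathcal{R}$, which are isomorphism invariants, and the consistency condition between $R\in\mathcal{S}\setminus\mathcal{R}$ and $\phi_R$ is isomorphism invariant because $\phi_R$ is a formula in $L(\mathcal{R})$.

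For projectivity proper, fix $m<n$ and $\omega\in\Omega_m$, and let $E(\omega)\subseteq\Omega_n$ denote the set of $\omega'$ whose restriction to $\{1,\ldots,m\}$ is $\omega$. Write $\omega_\mathcal{R}$ for the $\mathcal{R}$-reduct of $\omega$. I would first dispose of the degenerate case $Q_T^{(m)}(\{\omega\})=0$: this means $\omega\models R(\vec{a})\not\Leftrightarrow \phi_R(\vec{a})$ for some $\vec{a}\in\{1,\ldots,m\}^k$ and some $R\in\mathcal{S}\setminus\mathcal{R}$, and since $\omega'\upharpoonright\{1,\ldots,m\}=\omega$ and $\phi_R$ is quantifier-free the same failure persists in any $\omega'\in E(\omega)$, so $Q_T^{(n)}(E(\omega))=0$ as well.

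In the non-degenerate case I would argue that the map $\omega'\mapsto\omega'_\mathcal{R}$ identifies the positive-probability elements of $E(\omega)$ with those $\mathcal{R}$-structures $\sigma$ on $\{1,\ldots,n\}$ that (i) restrict to $\omega_\mathcal{R}$ on $\{1,\ldots,m\}$, and (ii) satisfy $\sigma\models R(\vec{a})\Leftrightarrow\phi_R(\vec{a})$ for all $\vec{a}\in\{1,\ldots,n\}^k$ and all $R\in\mathcal{S}\setminus\mathcal{R}$: for such $\sigma$ the interpretations of the $\mathcal{S}\setminus\mathcal{R}$ relations are forced, and on $\{1,\ldots,m\}^k$ the quantifier-freeness of $\phi_R$ together with (i) guarantees agreement with $\omega$. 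Then
\[
Q_T^{(n)}(E(\omega))\;=\;\sum_{\sigma}\,\prod_{R\in\mathcal{R}} q_R^{|\{\vec{a}\,:\,\sigma\models R(\vec{a})\}|}(1-q_R)^{|\{\vec{a}\,:\,\sigma\not\models R(\vec{a})\}|},
\]
where $\sigma$ ranges over $\mathcal{R}$-structures on $\{1,\ldots,n\}$ extending $\omega_\mathcal{R}$. Splitting each count into the contribution from tuples in $\{1,\ldots,m\}^k$ (fixed by $\omega_\mathcal{R}$) and from tuples outside, and observing that the sum over the latter is a product of independent Bernoulli marginalisations each equal to $1$, collapses the right-hand side to exactly $Q_T^{(m)}(\{\omega\})$.

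The only delicate point, and the one I would be most careful about in writing up, is the bijective correspondence in the non-degenerate case between $E(\omega)$ (restricted to the support of $Q_T^{(n)}$) and the $\mathcal{R}$-extensions of $\omega_\mathcal{R}$: it hinges crucially on quantifier-freeness of the $\phi_R$, and would break for formulas with quantifiers, precisely as illustrated by the example preceding the proposition. Everything else reduces to bookkeeping of exponents and the elementary identity $\sum_{x\in\{0,1\}} q^{x}(1-q)^{1-x}=1$.
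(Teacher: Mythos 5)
Your proof is correct and follows essentially the same route as the paper's: dispose of the zero-probability case by noting the violation of $R(\vec{a})\leftrightarrow\phi_R(\vec{a})$ persists in every extension, then use the bijection between positive-probability extensions of $\omega$ and arbitrary $\mathcal{R}$-extensions of $\omega_{\mathcal{R}}$ (forced interpretations of $\mathcal{S}\setminus\mathcal{R}$, agreement on old tuples by quantifier-freeness), and finally projectivity of the independent Bernoulli distribution on $\mathcal{R}$-structures, which you verify by explicit marginalisation where the paper simply cites it. The only cosmetic blemish is stating condition (ii) as a property of the $\mathcal{R}$-structure $\sigma$, which does not interpret the symbols $R\in\mathcal{S}\setminus\mathcal{R}$; it should be phrased as the defining condition of the unique $\mathcal{S}$-expansion of $\sigma$.
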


Quantifier-free distributions also hold a special role in model-theoretic analysis.
In particular, \emph {asymptotic quantifier elimination} has been shown for various logics of interest to artificial intelligence.

\subsection{Asymptotic quantifier elimination}
We introduce our notion of asymptotic equivalence for families of distributions:
\begin{defn}
Two families of distributions $(Q^{(n)})$ and $(Q'^{(n)})$ are \emph{asymptotically
equivalent} if $\stackrel[n\rightarrow\infty]{}{\lim}\underset{A\subseteq\Omega_{n}}{\sup}|Q^{(n)}(A)-Q'^{(n)}(A)|=0$
\end{defn}
\begin{rem*}
In measure theoretic terms, the families of distributions $(Q^{(n)})$
and $(Q'^{(n)})$ are asymptotically equivalent if and only if the
limit of the total variation difference between them is $0$. 
\end{rem*}
We extend the notion to abstract distributions by calling abstract distributions asymptotically equivalent if they induce asymptotically equivalent families of distributions. 
This gives us the following setting for asymptotic quantifier elimination:
\begin{defn}
  Let $L(\mathcal{R})$ be an extension of the class of quantifier-free $\mathcal{R}$-formulas.
  Then \emph{$L(\mathcal{R})$ has asymptotic quantifier elimination} if every abstract $L(\mathcal{R})$ distribution is asymptotically equivalent to a quantifier-free distribution over $L(\mathcal{R})$.
\end{defn}

It is well-known that first-order logic has asymptotic quantifier elimination. 

Indeed, the asymptotic theory of relational first-order logic can be summarised as follows \cite[Chapter 4]{EbbinghausF06}:
\begin{defn}
\label{fact:Random} Let $\mathcal{R}$ be a relational vocabulary.
Then the first order theory\emph{ }$\mathrm{RANDOM}(\mathcal{R})$
is given by all axioms of the following form, called \emph{extension
axioms over $\mathcal{R}$}: 

\[
\forall_{v_{1},\ldots,v_{r}}\left(\underset{1\leq i<j\leq r}{\bigwedge}v_{i}\neq v_{j}\rightarrow\exists_{v_{r+1}}\left(\underset{1\leq i\leq r}{\bigwedge}v_{i}\neq v_{r+1}\wedge\underset{\varphi\in\Phi}{\bigwedge}\varphi\wedge\underset{\varphi\in\Delta_{r+1}\backslash\Phi}{\bigwedge}\neg\varphi\right)\right)
\]
where $r\in\mathbb{N}$ and $\Phi$ is a subset of 
\[
\Delta_{r+1}\coloneqq\left\{ R(\vec{x})|R\in\mathcal{R},\textrm{ \ensuremath{\vec{x}} a tuple from \ensuremath{\{v_{1},\ldots,v_{r+1}\}} containing \ensuremath{v_{r+1}}}\right\} .
\]
 
\end{defn}

\begin{fact}
\label{fact:FOL_QE}$\mathrm{\ensuremath{RANDOM}(\mathcal{R})}$ eliminates
quantifiers, i. e. for each formula $\varphi(\vec{x})$ there is a
quantifier-free formula $\varphi'(\vec{x})$ such that $\mathrm{\ensuremath{RANDOM}(\mathcal{R})}\vdash\forall_{\vec{x}}(\varphi(\vec{x})\leftrightarrow\varphi'(\vec{x}))$.
\end{fact}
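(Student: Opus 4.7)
The plan is to prove the quantifier elimination by induction on the complexity of $\varphi(\vec{x})$. Atomic formulas are already quantifier-free, and Boolean connectives preserve quantifier-freeness modulo $\mathrm{RANDOM}(\mathcal{R})$-equivalence, so the only non-trivial inductive step is to show that whenever $\psi(\vec{x},y)$ is $\mathrm{RANDOM}(\mathcal{R})$-equivalent to a quantifier-free formula, then so is $\exists y \, \psi(\vec{x}, y)$; the universal case follows by negation. For this step I would first put $\psi$ into disjunctive normal form and distribute the existential through the disjunction, reducing the problem to conjunctions of literals $\exists y \, \theta(\vec{x}, y)$, from which literals not mentioning $y$ can be pulled out of the quantifier.

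Next I would case-split on the equality type of the free variables: for each equivalence relation $\sim$ on the indices of $\vec{x}$, attach the quantifier-free side condition $\bigwedge_{i \sim j} x_i = x_j \wedge \bigwedge_{i \not\sim j} x_i \neq x_j$ and, within that branch, replace each $x_i$ by a fixed representative $x'_k$ of its class. If $\theta$ asserts $y = x'_k$ for some $k$, substitute and the existential is witnessed immediately; if it asserts both $y = x'_k$ and $y \neq x'_k$ it is unsatisfiable. Otherwise, the remaining conjuncts form a set of inequalities $y \neq x'_k$ together with a partial atomic type in $y$ over the now genuinely distinct tuple $\vec{x}'$. At this point the extension axioms fire directly: for every completion of that partial type to a full subset $\Phi \subseteq \Delta_{r+1}$, the corresponding instance of $\mathrm{RANDOM}(\mathcal{R})$ supplies a witness $y$ distinct from all $x'_k$. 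Hence $\exists y \, \theta$ is equivalent to $\top$ when the partial type contains no syntactic conflict of the form $R(\vec{s}), \neg R(\vec{s})$, and to $\bot$ otherwise. Reassembling across the equality-type branches yields a quantifier-free formula equivalent to $\exists y \, \psi$ modulo $\mathrm{RANDOM}(\mathcal{R})$.

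The main obstacle is not a deep one but is entirely bookkeeping: the extension axiom as stated requires the variables $v_1,\ldots,v_r$ to be pairwise distinct and asks for a \emph{complete} specification of the atomic type $\Delta_{r+1}$ on $v_{r+1}$, whereas the literals appearing inside $\theta$ describe only a partial atomic type over a tuple whose components need not be distinct. The equality-type case analysis together with the passage from a consistent partial type to the disjunction over all of its completions is exactly what is needed to match the hypothesis of the axiom; once that matching is in place, $\mathrm{RANDOM}(\mathcal{R})$ does all the substantive work.
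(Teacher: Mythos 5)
The paper offers no proof of this statement at all: it is quoted as a known Fact, with the argument delegated to the classical literature (Ebbinghaus and Flum, Chapter 4). Your proposal is essentially that classical proof --- induction on formulas reducing to $\exists y$ over a conjunction of literals, an equality-type case split to meet the pairwise-distinctness hypothesis, and completion of the consistent partial atomic type of $y$ so that a single extension axiom of $\mathrm{RANDOM}(\mathcal{R})$ supplies the witness --- and it is correct as it stands (only trivial bookkeeping remains, e.g.\ the empty-conjunction case and, if one admits constant symbols, treating them as further parameters at which the universally quantified extension axioms are instantiated).
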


It is sometimes helpful to characterise this quantifier-free formula
somewhat more explicitly:
\begin{prop}
\label{prop:form_QE}Let $\varphi(\vec{x})$ be a formula of first-order
logic. Then:
\begin{enumerate}
\item $\varphi'(\vec{x})$ as in Fact \ref{fact:FOL_QE} can be chosen such
that only those relation symbols occur in $\varphi'$ that occur in
$\varphi$.
\item If every atomic subformula of $\varphi$ contains at least one free
variable not in $\vec{x}$, and no relation symbol occurs with different
variables in different literals, then either $\mathrm{\ensuremath{RANDOM}(\mathcal{R})}\vdash\forall_{\vec{x}}\varphi(\vec{x})$
or $\mathrm{\ensuremath{RANDOM}(\mathcal{R})}\vdash\forall_{\vec{x}}\neg\varphi(\vec{x})$.
\end{enumerate}
\end{prop}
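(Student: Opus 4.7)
My plan is to inspect the inductive quantifier-elimination procedure underlying Fact \ref{fact:FOL_QE}, keeping track at each step of which atoms and relation symbols survive in the intermediate quantifier-free formula. For part (1), recall that the standard argument treats an innermost subformula $\exists v_{r+1}\, \psi(v_1, \ldots, v_{r+1})$ (with $\psi$ quantifier-free) as follows. Let $\alpha_1, \ldots, \alpha_m$ be the atomic subformulas of $\psi$ that mention $v_{r+1}$; each extension axiom in Definition \ref{fact:Random} asserts the existence of a witness $v_{r+1}$ realising any prescribed consistent truth assignment $(t_1, \ldots, t_m)$ to these atoms. Hence $\exists v_{r+1}\, \psi$ is provably equivalent modulo $\mathrm{RANDOM}(\mathcal{R})$ to the disjunction $\bigvee_{(t_i)} \psi[\alpha_i \mapsto t_i]$ over consistent assignments, which is a boolean combination of atoms of $\psi$ that do not mention $v_{r+1}$. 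Iterating this rewriting outward through the quantifier prefix of $\varphi$ therefore never introduces relation symbols beyond those already in $\varphi$.

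For part (2), I would first put $\varphi(\vec{x})$ into prenex form $Q_1 y_1 \ldots Q_k y_k\, \chi(\vec{x}, \vec{y})$. Prenex conversion preserves atomic subformulas, so every atom of $\chi$ still contains at least one $y_i$ and the single-variable-tuple restriction still holds. Applying the existential rewriting from part (1) to the innermost $Q_k y_k$ (handling $\forall$ via $\neg \exists \neg$) yields a formula whose atoms form a subset of those of $\chi$ not mentioning $y_k$; by the first hypothesis each surviving atom still mentions some $y_j$ with $j < k$, so the invariant that every atom contains a quantified variable still to be eliminated is maintained. After all $k$ quantifiers are removed, no atom of $\chi$ can survive, and the remaining quantifier-free formula is therefore a boolean combination of no atoms, hence equivalent to $\top$ or $\bot$. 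Its universal closure is correspondingly provable in $\mathrm{RANDOM}(\mathcal{R})$.

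The delicate technical point is confirming that the existential rewriting really does stay within the atoms of $\chi$, without silently introducing auxiliary equalities or atoms. This is where the hypothesis that no relation symbol appears with different variable tuples is used: it guarantees that the consistency requirement on $(t_1, \ldots, t_m)$ collapses to the trivial one, so that no two $\alpha_i$ can be accidentally identified under variable substitution and thereby force an extra equality atom among $v_1, \ldots, v_r$ to be conjoined. Once this is confirmed, and the universal case is reduced to the existential one by duality, the remaining inductive bookkeeping is routine.
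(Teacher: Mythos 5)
Your part (1) is fine: inspecting the witness-based elimination step and noting that it only rearranges the atoms already present is a valid (if different) route; the paper instead gets this in one line from the fact that $\mathrm{RANDOM}(\mathcal{T})$ is the reduct of $\mathrm{RANDOM}(\mathcal{R})$ for the subvocabulary $\mathcal{T}$ of symbols occurring in $\varphi$. For part (2), however, your argument has a genuine gap at the very first step. You claim that prenex conversion preserves atomic subformulas, so that the hypothesis ``no relation symbol occurs with different variables in different literals'' survives prenexing. That is false precisely in the typical case covered by the proposition: a formula satisfying this hypothesis will usually reuse the same bound variable name in different literals (that is exactly how one writes, say, $\varphi(x)=\exists_{y} R(x,y)\wedge\exists_{y}\neg R(x,y)$, which satisfies both hypotheses), and pulling the quantifiers out then forces a renaming, e.g.\ $\exists_{y}\exists_{y'}\left(R(x,y)\wedge\neg R(x,y')\right)$, in which $R$ now occurs with two different variable tuples. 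So the invariant you lean on in your final paragraph --- that the single-tuple condition makes every truth assignment to the atoms containing the eliminated variable realizable, so no equality atoms are ever conjoined --- is simply not available for the matrix $\chi$ in the form you state it.

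This is not a cosmetic point: once the same relation symbol can occur with several renamed tuples, two distinct atoms sharing a symbol could in principle both contain the variable currently being eliminated, and then not every assignment is realizable (if the outer parameters happen to coincide, the two atoms denote the same fact), so the correct rewriting reintroduces equality case distinctions among the remaining variables --- exactly what your induction must exclude, since an equality between two $\vec{x}$-variables would block the final ``no atoms survive, hence $\top$ or $\bot$'' conclusion. The argument can be repaired: using the nesting of quantifier scopes one can show that when the innermost remaining variable is eliminated, no two distinct surviving renamed atoms with the same relation symbol contain it, so the assignment-disjunction step remains valid and no equalities ever appear; but that scope analysis is the real content of the step and is missing from your proposal. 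Note that the paper's own proof of (2) sidesteps all of this bookkeeping by a different device: it replaces each atom $R(\vec{x},\vec{y})$ by a fresh symbol $R_{\vec{x}}(\vec{y})$, checks that interpreting these as slices $R(\vec{a},\cdot)$ in a model of $\mathrm{RANDOM}(\mathcal{R})$ yields a model of $\mathrm{RANDOM}(\bar{\mathcal{R}})$ (this is where both hypotheses are used), and then invokes completeness of $\mathrm{RANDOM}(\bar{\mathcal{R}})$ to decide $\bar{\varphi}$ uniformly in $\vec{a}$; you may find it easier to adopt that argument than to carry out the renaming-sensitive induction.
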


The importance of $\mathrm{RANDOM}(\mathcal{R})$ comes from its role
as the asymptotic limit of the class of all $\mathcal{R}$-structures.
In fact, it axiomatises the limit theory of $\mathcal{R}$-structures
even when the individual probabilities of relational atoms are given
by $q_{R}$ rather than $\frac{1}{2}$:
\begin{fact}\label{fact:asymptotic_theory_FOL}
  $\underset{n\rightarrow\infty}{\lim}Q_{T}^{(n)}(\varphi)=1$
for all abstract distributions $T$ over $\mathcal{R}$ and all extension
axioms $\varphi$  over $\mathcal{R}$.
\end{fact}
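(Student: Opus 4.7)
The plan is to adapt the classical extension-axiom argument to the biased setting. Fix an extension axiom $\varphi$ over $\mathcal{R}$ with parameter $r$ and distinguished subset $\Phi \subseteq \Delta_{r+1}$, and fix distinct $a_1,\ldots,a_r \in \{1,\ldots,n\}$ together with a further element $b \in \{1,\ldots,n\} \setminus \{a_1,\ldots,a_r\}$. I first compute the probability under $Q_T^{(n)}$ that the witness conjunction $\bigwedge_{\psi \in \Phi} \psi \wedge \bigwedge_{\psi \in \Delta_{r+1} \setminus \Phi} \neg\psi$ holds when $v_1,\ldots,v_r,v_{r+1}$ are interpreted by $a_1,\ldots,a_r,b$. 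Since the witness conjunction only mentions relations from $\mathcal{R}$, and by the definition of $Q_T^{(n)}$ the ground $\mathcal{R}$-atoms are mutually independent with bias $q_R$, this probability equals the constant
\[
  p \;:=\; \prod_{R(\vec{x}) \in \Phi} q_R \;\cdot\; \prod_{R(\vec{x}) \in \Delta_{r+1} \setminus \Phi} (1-q_R),
\]
which depends only on $\varphi$ and $T$ and is strictly positive provided $q_R \in (0,1)$ for every $R$ appearing in $\Delta_{r+1}$ (a mild condition without which the fact as stated can fail).

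The second step exploits the syntactic restriction built into $\Delta_{r+1}$, namely that every atom it contains mentions the variable $v_{r+1}$. For two distinct candidate witnesses $b \neq b'$ this means that the ground $\mathcal{R}$-atoms entering the evaluation of the witness conjunction at $b$ are disjoint from those entering its evaluation at $b'$. The independence of the atoms under $Q_T^{(n)}$ therefore lifts to independence of the events ``$b$ is a witness for $\vec{a}$'' as $b$ ranges over $\{1,\ldots,n\} \setminus \{a_1,\ldots,a_r\}$, so the probability that no such $b$ is a witness for a fixed $\vec{a}$ is exactly $(1-p)^{n-r}$.

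A final union bound over the at most $n^r$ ordered tuples $\vec{a}$ of distinct elements of $\{1,\ldots,n\}$ yields
\[
  Q_T^{(n)}(\neg\varphi) \;\leq\; n^r (1-p)^{n-r},
\]
which tends to $0$ as $n \to \infty$ since $0 < 1-p < 1$. The one substantive ingredient is the disjointness-of-atoms observation of the second step, which follows immediately from the form of $\Delta_{r+1}$; the remainder is a direct translation of the standard 0-1 law proof with uniform bias $\tfrac12$ to the non-uniform biases $q_R$ supplied by $T$.
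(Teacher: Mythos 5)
Your proof is correct and is essentially the classical argument that the paper implicitly relies on: this statement is presented as a quoted Fact (from the 0--1 law literature, cf.\ Ebbinghaus and Flum), and the paper gives no proof of its own, while your union bound over tuples $\vec{a}$ combined with the disjointness of the ground atoms instantiating $\Delta_{r+1}$ at distinct witnesses $b \neq b'$ is exactly the standard route, correctly adapted from bias $\tfrac12$ to the biases $q_R$. Your caveat is also well taken: since the definition of an abstract distribution allows $q_R \in \{0,1\}$, the Fact as literally stated can fail in those degenerate cases (the witness probability $p$ vanishes), so strictly one needs $q_R \in (0,1)$ for all $R$ occurring in $\Delta_{r+1}$, or a separate reduction eliminating relations with extreme probabilities.
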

\begin{cor}
  First-order logic has asymptotic quantifier elimination.
\end{cor}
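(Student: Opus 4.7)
The plan is to reduce the corollary to the two stated facts about the asymptotic theory of first-order logic. Fix an abstract FOL-distribution $T$ over $\mathcal{R}$ with vocabulary $\mathcal{S}$, so that for each $R \in \mathcal{S} \setminus \mathcal{R}$ we have a first-order formula $\phi_R$. By Fact~\ref{fact:FOL_QE}, for each such $R$ there is a quantifier-free formula $\phi'_R$ of the same arity with $\mathrm{RANDOM}(\mathcal{R}) \vdash \forall \vec{x}(\phi_R(\vec{x}) \leftrightarrow \phi'_R(\vec{x}))$. I would define the quantifier-free distribution $T'$ by keeping the same probabilities $q_R$ for $R \in \mathcal{R}$ and replacing each $\phi_R$ with $\phi'_R$. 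The goal is then to show $T$ and $T'$ induce asymptotically equivalent families.

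Next I would set up a coupling. Both $Q_T^{(n)}$ and $Q_{T'}^{(n)}$ have the same marginal on $\mathcal{R}$-structures, namely the product distribution that samples each atom $R(\vec{a})$ independently with probability $q_R$; call this marginal $Q_0^{(n)}$. Conditional on an $\mathcal{R}$-structure $\omega_0$, both $T$ and $T'$ determine the interpretations of the predicates in $\mathcal{S}\setminus\mathcal{R}$ deterministically (via $\phi_R$ and $\phi'_R$ respectively). Hence the sup of $|Q_T^{(n)}(A) - Q_{T'}^{(n)}(A)|$ over events $A \subseteq \Omega_n$ is bounded above by the $Q_0^{(n)}$-probability of the event
\[
E_n \;\coloneqq\; \bigcup_{R \in \mathcal{S} \setminus \mathcal{R}} \bigcup_{\vec{a} \in D^{\mathrm{ar}(R)}} \bigl\{\,\omega_0 : \omega_0 \models \phi_R(\vec{a}) \not\leftrightarrow \phi'_R(\vec{a})\,\bigr\}.
\]
Equivalently, $E_n$ is the event that $\forall \vec{x}(\phi_R(\vec{x}) \leftrightarrow \phi'_R(\vec{x}))$ fails for some $R \in \mathcal{S} \setminus \mathcal{R}$.

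The remaining step is to show $Q_0^{(n)}(E_n) \to 0$. Since $\mathcal{S}\setminus\mathcal{R}$ is finite, it suffices to bound each conjunct separately. For a fixed $R$, the deduction $\mathrm{RANDOM}(\mathcal{R}) \vdash \forall \vec{x}(\phi_R \leftrightarrow \phi'_R)$ uses only finitely many extension axioms by the compactness of first-order proofs, say $\psi_1,\dots,\psi_k$. By Fact~\ref{fact:asymptotic_theory_FOL} applied to the abstract $\mathcal{R}$-distribution underlying $Q_0^{(n)}$, we have $Q_0^{(n)}(\psi_i) \to 1$ for each $i$, hence $Q_0^{(n)}(\psi_1 \wedge \dots \wedge \psi_k) \to 1$, and on this event $\phi_R$ and $\phi'_R$ coincide. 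Taking a union bound over the finite set $\mathcal{S}\setminus\mathcal{R}$ yields $Q_0^{(n)}(E_n) \to 0$, which gives the asymptotic equivalence of $T$ and $T'$.

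The main obstacle is conceptual rather than technical: one must recognise that the coupling of $Q_T^{(n)}$ and $Q_{T'}^{(n)}$ over their common $\mathcal{R}$-marginal turns a model-theoretic quantifier-elimination statement (equivalence modulo $\mathrm{RANDOM}(\mathcal{R})$) into the required total-variation bound, and that the finiteness of the syntactic proof, together with the finiteness of $\mathcal{S}\setminus\mathcal{R}$, allows a uniform union bound. Everything else is routine bookkeeping.
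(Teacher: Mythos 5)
Your proof is correct and follows essentially the same route as the paper: the paper establishes the corollary by exactly this replacement of each $\phi_R$ with a quantifier-free $\phi'_R$ equivalent over finitely many extension axioms (Fact~\ref{fact:FOL_QE}), then uses Fact~\ref{fact:asymptotic_theory_FOL} to bound the total variation distance by the probability that some equivalence $\forall\vec{x}(\phi_R\leftrightarrow\phi'_R)$ fails, as in its appendix proof of Theorem~\ref{thm:QE_LFP_distribution}. Your explicit coupling over the common $\mathcal{R}$-marginal is just a slightly more detailed phrasing of the same bound.
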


\section{Probabilistic logic programs as least fixed point distributions}\label{sec:LFP}

We will now proceed briefly to discuss fixed point logics. Our presentation follows the book by Ebbinghaus and Flum (2006, Chapter 8), to which we refer the reader for a more detailed exposition.
We begin by introducing the syntax.

As atomic second-order formulas occur, as subformulas of least fixed point formulas, we will introduce second-order variables. 

\begin{defn}
Assume an infinite set of second-order variables, indicated customarily by upper-case letters from the end of the alphabet, each annotated with a natural number arity. 
Then an \emph{atomic second-order formula} $\varphi$ is either a (first-order) atomic formula, or an expression of the form $X(t_1, \dots, t_n)$, where $X$ is a second-order variable of arity $n$ and $t_1, \dots, t_n$ are constants or (first-order) variables.  
\end{defn}

We now proceed to least fixed point formulas. 

\begin{defn}
\label{def:LFP}A formula $\varphi$ is called\emph{ positive in a
variable $x$} if $x$ is in the scope of an even number of negation
symbols in $\varphi$. 

A \emph{formula in least fixed point logic }or \emph{LFP formula} over a vocabulary
$\mathcal{R}$\emph{ }is defined inductively as follows:
\begin{enumerate}
\item Any atomic second-order formula is an LFP formula.
\item If $\varphi$ is an LFP formula, then so is $\neg\varphi$.
\item If $\varphi$ and $\psi$ are LFP formulas, then so is $\varphi\vee\psi$
\item If $\varphi$ is an LFP formula, then so is $\exists x\varphi$ for
a first-order variable $x$.
\item If $\varphi$ is an LFP formula, then so is $[\mathrm{LFP}_{\vec{x},X}\varphi]\vec{t}$,
where $\text{\ensuremath{\varphi}}$ is positive in the second-order
variable $X$ and the lengths of the string of first-order variables
$\vec{x}$ and the string of terms $\vec{t}$ coincide with the arity
of $X$. 
\end{enumerate}

An occurrence of a second-order variable $X$ is \emph{bound} if it is in the scope of an LFP quantifier $\mathrm{LFP}_{\vec{x},X}$ and \emph{free} otherwise.
\end{defn}

Fixed point semantics have been used extensively in (logic) programming
theory \cite{Fitting02}, and we will exploit this
when relating the model theory of LFP to probabilistic logic programming
below. 

We first associate an operator with each LFP formula $\varphi$:
\begin{defn}
\label{def:F_=00005Cvarphi}Let $\varphi(\vec{x},\vec{u},X,\vec{Y})$
be an LFP formula, with the length of $\vec{x}$ equal to the arity
of $X$, and let $\omega$ be an $\mathcal{R}$-structure with domain
$D$. Let $\vec{b}$ and $\vec{S}$ be an interpretation of $\vec{u}$ and
$\vec{Y}$ respectively. Then we define the operator $F^{\varphi}:\mathfrak{P}(D^{k})\rightarrow\mathfrak{P}(D^{k})$
as follows:
\[
F^{\varphi}(R)\coloneqq\left\{ \vec{a}\in D^{k}|\omega\models\varphi(\vec{a},\vec{b},R,\vec{S})\right\} .
\]
\end{defn}

Since we have restricted Rule 5 in Definition \ref{def:LFP} to positive
formulas, $F^{\varphi}$ is monotone for all $\varphi$ (i. e. $R\subseteq F^{\varphi}(R)$
for all $R\subseteq D^{k}$). Therefore we have:
\begin{fact}
\label{fact:LFP_exists}For every $\mathrm{LFP}$ formula $\varphi(\vec{x},\vec{u},X,\vec{Y})$
and every $\mathcal{R}$-structure on a domain $D$ and interpretation
of variables as in Definition \ref{def:F_=00005Cvarphi}, there is
a relation $R\subseteq D^{k}$ such that $R=F^{\varphi}(R)$ and that
for all $R'$ with $R'=F^{\varphi}(R')$ we have $R\subseteq R'$.
\end{fact}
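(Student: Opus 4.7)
The plan is to establish Fact \ref{fact:LFP_exists} as an instance of the Knaster--Tarski fixed point theorem applied to the finite complete lattice $\mathfrak{P}(D^{k})$ ordered by inclusion. Two ingredients are needed: order-preservation of $F^{\varphi}$ with respect to the second-order argument $X$, and an explicit construction of the least fixed point. Throughout, $\vec{b}$ and $\vec{S}$ are treated as fixed parameters.

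First, I would prove by structural induction on the LFP formula $\varphi(\vec{x},\vec{u},X,\vec{Y})$ that, provided $\varphi$ is positive in $X$, the operator $F^{\varphi}$ is monotone in $X$, meaning $R_{1}\subseteq R_{2}$ implies $F^{\varphi}(R_{1})\subseteq F^{\varphi}(R_{2})$. The base cases are atomic: if $X$ does not occur, $F^{\varphi}$ is constant; if $\varphi$ is $X(\vec{t})$, monotonicity is immediate. The inductive cases for $\vee$ and $\exists$ are routine. The case of $\neg$ exploits positivity: by Definition \ref{def:LFP}, every occurrence of $X$ lies under an even number of negations, so after absorbing one top-level negation the residual subformula is again positive in $X$, to which the induction hypothesis applies at the correct polarity. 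The nested case $[\mathrm{LFP}_{\vec{y},Y}\psi]\vec{s}$ requires that the least fixed point in $Y$ is itself monotone in the parameter $X$; this follows by comparing the iterates generated from $\emptyset$ under the two parametrised operators.

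Once monotonicity is in hand, the least fixed point is obtained constructively. Set $R_{0}\coloneqq\emptyset$ and $R_{i+1}\coloneqq F^{\varphi}(R_{i})$. Since $R_{0}\subseteq R_{1}$ trivially, monotonicity propagates this inductively to give an ascending chain
\[
R_{0}\subseteq R_{1}\subseteq R_{2}\subseteq\cdots\subseteq D^{k}.
\]
Because $D$, and hence $\mathfrak{P}(D^{k})$, is finite, the chain stabilises at some stage $N$, and $R\coloneqq R_{N}$ satisfies $R=F^{\varphi}(R)$. For minimality, if $R'$ is any fixed point, then $R_{0}=\emptyset\subseteq R'$, and by induction using monotonicity $R_{i}\subseteq F^{\varphi}(R')=R'$ for every $i$, whence $R\subseteq R'$.

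The main obstacle is the monotonicity induction, specifically tracking polarity through nested negations and dealing with the nested LFP operator, where one must invoke a parametric monotonicity of least fixed points as a sub-lemma. These arguments are standard in the literature on fixed point logics (see Ebbinghaus and Flum, 2006, Chapter 8), but the polarity bookkeeping is where errors are most likely to arise; once the induction is set up cleanly, the final application of Knaster--Tarski on a finite lattice is routine.
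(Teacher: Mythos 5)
Your argument is correct and is exactly the standard proof the paper implicitly appeals to: the statement is presented as a known fact (Ebbinghaus and Flum 2006, Chapter 8), the text merely observing that positivity of $\varphi$ in $X$ makes $F^{\varphi}$ monotone, after which the least fixed point is obtained by the same Knaster--Tarski/iteration-from-$\emptyset$ argument on the finite lattice $\mathfrak{P}(D^{k})$ that you spell out, including minimality via $R_i \subseteq R'$ by induction. One small point of care: in the negation case the immediate subformula is \emph{negative} in $X$ (odd nesting depth), so the structural induction must be carried out simultaneously for both polarities (positive $\Rightarrow$ monotone, negative $\Rightarrow$ antitone), which is what your remark about applying the induction hypothesis ``at the correct polarity'' should be taken to mean.
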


\begin{defn}
We call the $R$ from Fact \ref{fact:LFP_exists} the \emph{least
fixed point} of $\varphi(\vec{x},\vec{u},X,\vec{Y})$
\end{defn}

Now we are ready to define the semantics of least fixed point logic:
\begin{defn}
By induction on the definition of an LFP formula, we define when
an LFP formula $\varphi(\vec{X},\vec{x})$ is said to \emph{hold}
in an $\mathcal{R}$-structure $\omega$ for a tuple $\vec{a}$ from
the domain of $\omega$ and relations $\vec{A}$ of the correct arity:

The first-order connectives and quantifiers $\neg$, $\vee$ and $\exists$
as well as $\wedge$ and $\forall$ defined from them in the usual
way are given the usual semantics.

An atomic second order formula $X(\vec{x},\vec{c})$ holds if and
only if $(\vec{a},\vec{c_{\omega}})\in A$. 

$[\mathrm{LFP}_{\vec{x},X}\varphi]\vec{t}$ holds if and only if $\vec{a}$
is in the least fixed point of $F^{\varphi(\vec{x},X)}$. 
\end{defn}

\subsection{Probabilistic logic programming}

Our discussion on probabilistic logic programming employs the simplification
proposed by Riguzzi and Swift \shortcite{RiguzziS18} and considers a probabilistic logic program as a stratified Datalog program over probabilistic facts.
This \emph{distribution semantics} covers several different equally expressive formalisms \cite{RiguzziS18,deRaedtK15}.
Note that in particular, probabilistic logic programs as used here do not involve function symbols, unstratified negation or higher-order constructs.

See Appendix B.2 in the supplementary material or the book by Ebbinghaus and Flum (2006, Chapter 9) for an introduction to the syntax and semantics of stratified Datalog programs in line with this paper. 

We will use the notation $(\Pi,P)\vec{t}$ for an intensional symbol
$P$ of a stratified logic program $\Pi$ to mean that ``the program
$\Pi$ proves $P\vec{t}$''. 
\begin{defn}
A \emph{probabilistic logic program }consists of probabilistic facts
and deterministic rules, where the deterministic part is a stratified
Datalog program. We will consider it in our framework of abstract
distribution semantics as follows:

$\mathcal{R}$ is given by relation symbols $R'$ for every probabilistic
fact $p_{R}::R(\vec{x})$, with $q_{R'}\coloneqq p_{R}$. Their arity
is just the arity of $R$.

$\mathcal{S}$ is given by the vocabulary of the probabilistic logic
program and additionally the $R'$ in $\mathcal{R}$.

Let $\Pi$ be the stratified Datalog program obtained by prefixing
the program $\{R'(\vec{x})\leftarrow R(\vec{x})|R'\in\mathcal{R}\}$
to the deterministic rules of the probabilistic logic program.

Then $\phi_{P}$ for a $P\in\mathcal{S}\backslash\mathcal{R}$ is
given by $(\Pi,P)\vec{x}.$ 
\end{defn}

The distribution semantics for probabilistic logic programming is related
to the LFP distribution semantics introduced above through the following
fact \cite[Theorem 9.1.1]{EbbinghausF06}:
\begin{fact}
\label{fact:S-Datalog_to_LFP}For every stratifiable Datalog formula
$(\Pi,P)\vec{x}$ as above, there exists an $\mathrm{LFP}$ formula
$\varphi(\vec{x})$ over the extensional vocabulary $\mathcal{R}$
of $\Pi$ such that for every $\mathcal{R}$-structure $\omega$ and every
tuple $\vec{a}$ of elements of $\omega$ of the same length as $\vec{x}$, $\omega \models\varphi(\vec{a})$ if
and only if $\omega\models(\Pi,P)\vec{a}$.
\end{fact}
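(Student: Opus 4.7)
The plan is to proceed by induction on the number of strata of $\Pi$, showing that each intensional predicate at each stratum can be captured by an LFP formula over the extensional vocabulary $\mathcal{R}$. The translation will respect the fact that within a single stratum, intensional predicates occur only positively in the bodies of rules defining that stratum, which is exactly the positivity condition required by the LFP constructor.

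For the base case, consider the lowest stratum: it contains only rules in which intensional predicates $P_1, \ldots, P_m$ occur positively in the bodies, and no negations of intensional atoms appear. For each $P_i$, collect all rules with $P_i$ in the head and rewrite them as a single implication $P_i(\vec{x}) \leftarrow \psi_i(\vec{x}, P_1, \ldots, P_m, \vec{R})$, where $\psi_i$ is the disjunction of the rule bodies (with body-only variables existentially quantified) and $\vec{R}$ lists the extensional predicates. By construction each $\psi_i$ is positive in $P_1, \ldots, P_m$. The stratum's semantics is the simultaneous least fixed point of the system $(\psi_1, \ldots, \psi_m)$, which can be expressed in LFP either by a Bekic-style nesting of single LFP operators or, more directly, by encoding $(P_1, \ldots, P_m)$ as a single relation $P$ of arity $\max_i \mathrm{arity}(P_i) + 1$, where an additional coordinate selects the component, and recovering $P_i(\vec{x})$ as $[\mathrm{LFP}_{\vec{y},P} \Psi](i,\vec{x},\vec{z})$ for a suitable big disjunction $\Psi$ with padding variables $\vec{z}$.

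For the inductive step, assume that for all $j < k$ every intensional predicate $Q$ at stratum $j$ admits an LFP formula $\varphi_Q(\vec{x})$ over $\mathcal{R}$ with the required equivalence. At stratum $k$, the rules may contain negated atoms from lower strata but only positive occurrences of stratum-$k$ intensional atoms. Replace each occurrence of a lower-stratum atom $Q(\vec{t})$ by $\varphi_Q(\vec{t})$ and each negated occurrence by $\neg \varphi_Q(\vec{t})$. The resulting bodies then use the stratum-$k$ intensional predicates purely positively, so the base-case construction applies to produce LFP formulas $\varphi_P(\vec{x})$ over $\mathcal{R}$ equivalent to $(\Pi,P)\vec{x}$ for every stratum-$k$ predicate $P$. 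Note that the negated subformulas $\neg \varphi_Q$ introduce no positivity violation for the stratum-$k$ LFP quantifier, because $Q$ is no longer a free second-order variable: it has been replaced by a closed LFP subformula.

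The main obstacle will be the simultaneous-to-single LFP reduction in the base case. Verifying that the encoded single LFP computes the componentwise simultaneous fixed point correctly, and that the padding coordinates do not interfere with the monotone iteration, requires a careful stage-by-stage comparison of the iterates. Once that reduction is in place, the inductive step is essentially bookkeeping: substituting LFP formulas for lower-stratum atoms and checking that positivity is preserved for each fresh LFP quantifier introduced at the current stratum.
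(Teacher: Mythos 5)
The paper does not actually prove Fact~\ref{fact:S-Datalog_to_LFP}: it is quoted from Ebbinghaus and Flum (Theorem~9.1.1), and your stratum-by-stratum induction, forming the simultaneous least fixed point of each stratum and substituting already-constructed LFP definitions (possibly negated) for lower-stratum atoms, is essentially that standard proof. The only detail to watch is the simultaneous-to-single LFP reduction: a single extra selector coordinate distinguishes at most $|D|$ components, so small domains need extra padding coordinates or separate treatment, whereas the Bekic-style nesting you also mention avoids this issue entirely; either route closes the argument, so your proposal is sound.
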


\begin{rem*}
In fact, it suffices to consider formulas in the so-called bounded
fixed point logic, whose expressiveness lies between first order logic
and least fixed point logic \cite{EbbinghausF06}. 
\end{rem*}

\begin{notation*}
Although we have allowed second-order variables in the inductive definitions
above, we will assume from now on unless mentioned otherwise that
LFP formulas do not have free second-order variables.
\end{notation*}

\subsection{Asymptotic quantifier elimination for probabilistic logic programming}

We discuss the asymptotic reduction
of LFP to FOL by Blass et al. \shortcite{BlassGK85} and conclude that abstract LFP distributions and therefore probabilistic logic programs have asymptotic quantifier elimination.

The main theorem of Blass et al. \shortcite{BlassGK85} shows that $\mathrm{RANDOM}(\mathcal{R})$
not only eliminates classical quantifiers, but also least fixed point
quantifiers:
\begin{fact}
\label{fact:LFP_FOL}Let $\varphi(\vec{x})$ be an \emph{LFP} formula
over $\mathcal{R}$. Then there is a finite subset $G$ of $\mathrm{RANDOM}(\mathcal{R})$
and a quantifier-free formula $\varphi'(\vec{x})$ such that $G\vdash\forall_{\vec{x}}\varphi(\vec{x})\leftrightarrow\varphi'(\vec{x})$. 
\end{fact}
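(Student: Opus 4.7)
The plan is to prove Fact \ref{fact:LFP_FOL} in two stages: first reduce an arbitrary LFP formula to an equivalent first-order formula modulo a finite subset of $\mathrm{RANDOM}(\mathcal{R})$, then invoke Fact \ref{fact:FOL_QE} to eliminate the remaining first-order quantifiers. The second step is immediate, so the work lies in the first, which is proved by induction on the structure of LFP formulas. Atomic formulas, Boolean connectives and first-order quantifiers preserve being FOL, so the essential case is the LFP operator $\varphi = [\mathrm{LFP}_{\vec{x}, X}\psi(\vec{x}, \vec{u}, X)]\vec{t}$.

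For the LFP case, I would apply the induction hypothesis to $\psi$, generalised to permit the free second-order variable $X$, obtaining an FOL formula $\psi^{*}(\vec{x}, \vec{u}, X)$ with $G_{0} \vdash \forall \vec{x}\, \forall \vec{u}\, (\psi \leftrightarrow \psi^{*})$ for some finite $G_{0} \subseteq \mathrm{RANDOM}(\mathcal{R})$. I then set up syntactic stages $\psi_{0} := \bot$ and $\psi_{k+1}$ obtained from $\psi^{*}$ by substituting $\psi_{k}[\vec{x} \mapsto \vec{s}]$ for each atom $X(\vec{s})$, and at each stage invoke Fact \ref{fact:FOL_QE} to reduce $\psi_{k}$ to a quantifier-free formula $\hat\psi_{k}$ over $\mathcal{R}$ modulo a further finite fragment $G_{k}$. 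In any model of $G_{0} \cup G_{k}$, the set defined by $\hat\psi_{k}$ is exactly the $k$-th iterate $(F^{\psi})^{k}(\emptyset)$, which is non-decreasing by the positivity of $X$ in $\psi$.

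The key observation is that the sequence $(\hat\psi_{k})$ stabilises after a bounded number of steps: over the finite vocabulary $\mathcal{R}$ and the fixed finite tuple of free variables $\vec{x}, \vec{u}$, there are only finitely many quantifier-free formulas up to classical logical equivalence, since they are Boolean combinations of finitely many atoms. The non-decreasing sequence of equivalence classes $[\hat\psi_{k}]$ in this finite poset must therefore stabilise at some stage $n$ depending only on $\psi$. At that point $\hat\psi_{n}$ defines a fixed point of $F^{\psi}$ and, being the union of the iterates, the least fixed point; taking $\varphi' := \hat\psi_{n}[\vec{x} \mapsto \vec{t}]$ and $G := G_{0} \cup G_{1} \cup \cdots \cup G_{n}$ discharges the LFP case.

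The main obstacle I anticipate is the treatment of the free second-order variable $X$ in the generalised induction hypothesis: the extension axioms of $\mathrm{RANDOM}(\mathcal{R})$ constrain only the symbols in $\mathcal{R}$, so the applications of Fact \ref{fact:FOL_QE} inside the iteration must be organised so that $X$ is always instantiated by the already quantifier-free $\hat\psi_{k-1}$ before eliminating quantifiers, ensuring only $\mathcal{R}$-atoms remain. A secondary concern is that the stabilisation bound $n$ must be uniform in the model so that a single finite $G$ suffices; this follows from the finiteness of the ambient Boolean algebra of quantifier-free formulas, which depends only on the fixed vocabulary and variable tuple, and is precisely what distinguishes asymptotic quantifier elimination from mere pointwise equivalence in the limit.
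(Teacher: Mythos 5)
You should note at the outset that the paper itself does not prove Fact \ref{fact:LFP_FOL}: it is imported from Blass, Gurevich and Kozen (1985), so your proposal has to be measured against that argument. Your core strategy is indeed the right one and matches the cited proof for the basic case: unfold the induction syntactically, observe that each stage is quantifier-free definable modulo a finite fragment of $\mathrm{RANDOM}(\mathcal{R})$, use the fact that there are only finitely many quantifier-free formulas in the fixed variables $\vec{x},\vec{u}$ up to equivalence to force stabilisation at a stage depending only on the formula, and then argue that the stabilised stage is a fixed point contained in every fixed point, hence the least one. For a formula with a single, un-nested fixed point operator this is sound (modulo two small points: the finiteness of each fragment $G_k$ needs a word, e.g.\ compactness applied to Fact \ref{fact:FOL_QE}, and the stabilised stage is least because it is a fixed point below every fixed point, not because it is ``the union of the iterates'').

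The genuine gap is in the nested case, exactly at the step you call the generalised induction hypothesis. Applying the structural induction hypothesis to the body $\psi(\vec{x},\vec{u},X)$ with the second-order variable $X$ free requires the claim that every LFP formula over $\mathcal{R}\cup\{X\}$ is equivalent, modulo a finite fragment of $\mathrm{RANDOM}(\mathcal{R})$, to a first-order formula $\psi^{*}(\vec{x},\vec{u},X)$, \emph{uniformly in the interpretation of $X$}. But the extension axioms say nothing about $X$, and the claim is false in general: an inner fixed point whose body mentions $X$ (for instance a reachability- or well-foundedness-style induction relativised to $X$) need not close in a bounded number of stages uniformly in $X$, and its limit need not be first-order in $(\mathcal{R},X)$ over models of $\mathrm{RANDOM}(\mathcal{R})$. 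Your ``main obstacle'' paragraph addresses only where to apply Fact \ref{fact:FOL_QE} (instantiating $X$ by $\hat\psi_{k-1}$ before eliminating quantifiers), not how $\psi^{*}$ is obtained in the first place, which is where the induction actually breaks. The repair is to restructure the induction: induct on the number of LFP operators rather than on formula structure, and define the stages by substituting the already-computed quantifier-free stage $\hat\psi_{k}$ for $X$ directly into the \emph{original} body $\psi$. The formula $\psi[X/\hat\psi_{k}]$ then has strictly fewer LFP operators and no free second-order variables, so the induction hypothesis yields the next quantifier-free stage, and your finiteness-and-stabilisation argument goes through verbatim from there.
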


Putting this together, we can derive the following:
\begin{thm}
\label{thm:QE_LFP_distribution} Least fixed point logic has asymptotic quantifier elimination.
\end{thm}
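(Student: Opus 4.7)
The plan is to assemble the two main ingredients already in place: Fact \ref{fact:LFP_FOL} (which produces a quantifier-free FOL substitute for any LFP formula, modulo a finite fragment of $\mathrm{RANDOM}(\mathcal{R})$) and Fact \ref{fact:asymptotic_theory_FOL} (which makes every extension axiom asymptotically sure under any abstract distribution over $\mathcal{R}$). Let $T$ be an arbitrary abstract LFP distribution over $\mathcal{R}$ with vocabulary $\mathcal{S}$ and defining formulas $\phi_R$ for $R \in \mathcal{S}\setminus\mathcal{R}$. For each such $R$, I would apply Fact \ref{fact:LFP_FOL} to obtain a quantifier-free FOL formula $\phi'_R(\vec{x})$ over $\mathcal{R}$ together with a finite set $G_R\subseteq\mathrm{RANDOM}(\mathcal{R})$ such that $G_R \vdash \forall_{\vec{x}}(\phi_R(\vec{x}) \leftrightarrow \phi'_R(\vec{x}))$. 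I would then let $T'$ be the abstract quantifier-free distribution with the same $\mathcal{R}$-probabilities $q_R$ but with each $\phi_R$ replaced by $\phi'_R$.

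The next step is to quantify the event on which the two distributions agree. Since $\mathcal{S}\setminus\mathcal{R}$ is finite, $G \coloneqq \bigcup_R G_R$ is a finite subset of $\mathrm{RANDOM}(\mathcal{R})$. Both $T$ and $T'$ induce the same independent product measure on the $\mathcal{R}$-reducts of the sampled structures, so applying Fact \ref{fact:asymptotic_theory_FOL} to each axiom in $G$ and then intersecting the finitely many resulting events yields $Q_T^{(n)}\bigl(\bigwedge G\bigr) \to 1$ as $n\to\infty$. By construction, any $\mathcal{R}$-structure on $\{1,\dots,n\}$ satisfying $\bigwedge G$ yields exactly the same $\mathcal{S}$-completion when interpreted under $T$ as under $T'$. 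Using the natural coupling that first samples a single $\mathcal{R}$-structure and then applies either family of defining formulas, $Q_T^{(n)}$ and $Q_{T'}^{(n)}$ coincide off a set of vanishing probability, and so the total variation distance between them is bounded by $1 - Q_T^{(n)}\bigl(\bigwedge G\bigr)$, which tends to $0$.

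I do not anticipate a substantial obstacle beyond the facts already invoked, as the heavy lifting is packaged inside the Blass--Gurevich--Kozen reduction. The one point that warrants care is that the syntactic equivalence $G \vdash \phi_R \leftrightarrow \phi'_R$, being a purely first-order statement, holds pointwise on every $\mathcal{R}$-structure satisfying $G$, regardless of the probabilities $q_R$; this uniformity is precisely what allows the single quantifier-free distribution $T'$ to witness asymptotic equivalence for any choice of those probabilities, and what lets the coupling argument above close cleanly.
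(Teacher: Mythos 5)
Your proposal is correct and follows essentially the same route as the paper's own proof: apply Fact \ref{fact:LFP_FOL} to each $\phi_R$, collect the finitely many extension axioms needed, use Fact \ref{fact:asymptotic_theory_FOL} to make that finite conjunction asymptotically sure, and bound the total variation distance by the probability of the exceptional set. The coupling phrasing is just a slightly more explicit rendering of the paper's observation that the two induced measures agree on every world whose $\mathcal{R}$-reduct satisfies the equivalences.
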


To obtain a characterisation within probabilistic logic programming,
however, we need to translate quantifier-free first order formulas
back to stratifiable Datalog.

In fact, they can be mapped to a subset of stratified Datalog that
is well-known from logic programming:
\begin{defn}
A Datalog program, Datalog formula or probabilistic logic program
is called \emph{determinate }if every variable occurring in the
body of a clause also occurs in the head of that clause.
\end{defn}

\begin{exam}
  Examples of determinate clauses in this sense are $R(x) \coloneq P(x)$ or $Q(x,y) \coloneq R(x)$.
  Indeterminate clauses include $R(x) \coloneq P(y)$ or $R(x) \coloneq Q(x,y)$.   
\end{exam}

Determinacy corresponds exactly to the fragment of probabilistic
logic programs identified as projective by Jaeger and Schulte (2018, Proposition 4.3).

Indeed, Ebbinghaus and Flum's \shortcite{EbbinghausF06} proof of their Theorem 9.1.1 shows:
\begin{fact}
\label{fact:QF_to_dS-Datalog}Every quantifier-free first order formula
is equivalent to an acyclic determinate stratified Datalog formula.
\end{fact}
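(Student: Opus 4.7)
The plan is to exploit disjunctive normal form. Because $\varphi(\vec{x})$ is quantifier-free, every atomic subformula uses only variables from $\vec{x}$ together with constants, so no variable outside $\vec{x}$ ever occurs. First I would put $\varphi$ into disjunctive normal form
\[
\varphi(\vec{x}) \;\equiv\; \bigvee_{i=1}^{m} \bigl( L_{i,1} \wedge \dots \wedge L_{i,k_i} \bigr),
\]
where each $L_{i,j}$ is a literal whose variables are drawn from $\vec{x}$.

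Next, I would introduce a single fresh intensional predicate $P$ of arity $|\vec{x}|$ and form the Datalog program $\Pi$ consisting of one clause per disjunct:
\[
P(\vec{x}) \leftarrow L_{i,1}, \dots, L_{i,k_i} \qquad (1 \le i \le m).
\]
I would then verify the three required properties in turn. Determinacy: every variable appearing in a body literal lies in $\vec{x}$, which is precisely the argument tuple of the head. Acyclicity and stratification: $P$ is the only intensional symbol in $\Pi$, and each body atom involves only an extensional relation from $\mathcal{R}$, so $P$ never depends on itself and a single stratum above $\mathcal{R}$ accommodates the negative literals. Equivalence: by construction, $\omega \models (\Pi,P)\vec{a}$ iff for some $i$ all conjuncts $L_{i,j}(\vec{a})$ hold in $\omega$, which is precisely the DNF characterisation of $\omega \models \varphi(\vec{a})$.

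The main obstacle I anticipate is the handling of equality atoms $x_i = x_j$ or $x_i = c$, should they occur in $\varphi$, since Datalog does not carry a native equality symbol in the body. The workaround would be to branch on the equality type of $\vec{x}$: for each disjunct I enumerate the partitions of $\{x_1, \dots, x_n\}$ together with possible identifications with constants compatible with its equality literals, generate one clause per such pattern by identifying the relevant variables in both the head and the remaining (non-equality) body literals, and then discard the equality literals themselves. This keeps determinacy intact since no fresh variables are introduced, and it mirrors the unification step already present in the proof of Theorem~9.1.1 of Ebbinghaus and Flum that is cited as the source of the fact.
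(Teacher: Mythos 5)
The paper offers no proof of this Fact itself; it is quoted directly from the proof of Theorem 9.1.1 of Ebbinghaus and Flum, so your argument has to stand on its own. Its core is fine: the DNF construction with one clause $P(\vec{x}) \leftarrow L_{i,1},\dots,L_{i,k_i}$ per disjunct is determinate (every body variable lies in $\vec{x}$), acyclic and in fact pure (only extensional symbols occur in the bodies, whether negated or not), and the fixed-point semantics yields $(\Pi,P)\vec{a}$ exactly when some disjunct, hence $\varphi$, holds of $\vec{a}$. Note also that the equality case you flag cannot be waved away: the quantifier-free formulas this Fact is applied to do contain equality atoms (see $\phi_P = (x=y \vee R(x))$ in Example 2, and the output of quantifier elimination over $\mathrm{RANDOM}(\mathcal{R})$ in general distinguishes the equality type of the free variables).

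The genuine gap is your treatment of \emph{negative} equality literals. Handling positive equations by identifying variables (or substituting constants) in head and body is sound, since a head $P(x,x,\dots)$ can only produce tuples with equal entries in those positions. But for a disequality $x_i \neq x_j$ you choose a partition in which $x_i$ and $x_j$ lie in different classes and then discard the literal; distinct Datalog variables may nevertheless be instantiated by the same domain element, so the resulting clause over-derives. Concretely, for $\varphi(x,y) = R(x) \wedge x \neq y$ your program becomes $P(x,y) \leftarrow R(x)$, which derives $P(a,a)$ whenever $R(a)$ holds, although $\varphi(a,a)$ is false. The repair is simpler than your workaround: keep equality and inequality atoms as body literals (this preserves determinacy, since their variables occur in the head, and is permitted in the Datalog dialect of Ebbinghaus and Flum from which the Fact is taken), in which case no enumeration of partitions is needed at all. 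If one insists on bodies without $=$, the alternative is an auxiliary intensional predicate defined by the single clause $Eq(x,x) \leftarrow$ in a lower stratum, with $\neg Eq(x,y)$ used in the clause for $P$; this keeps determinacy and stratification, but is acyclic only in the usual sense rather than in the strengthened sense of Appendix B.2.
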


Therefore, we can conclude from Proposition \ref{prop:QF_implies_projective}:

\begin{prop}
\label{prop:determinate->projective}Every determinate probabilistic
logic program is projective.
\end{prop}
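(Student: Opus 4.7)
The plan is to reduce the statement to Proposition \ref{prop:QF_implies_projective}: I aim to show that a determinate probabilistic logic program gives rise to an abstract distribution in which every defining formula $\phi_{P} = (\Pi, P)\vec{x}$ is everywhere equivalent to a quantifier-free first-order formula over the extensional vocabulary $\mathcal{R}$. Once that equivalence is in place, the abstract distribution is a quantifier-free distribution in the sense of Proposition \ref{prop:QF_implies_projective}, and projectivity follows immediately.

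The equivalence would be established by induction on the stratification of $\Pi$. The key structural observation is that a clause $P(\vec{x}) \leftarrow B_{1},\dots,B_{k}$ from a determinate program has every body variable contained in $\vec{x}$, so its naive first-order translation is a conjunction of literals whose arguments are all drawn from the head tuple $\vec{x}$ and therefore requires no existential quantifier. The definition of $P$ is then the disjunction of these clause translations, and any intensional body literal referring to a predicate from an earlier stratum can be substituted by its quantifier-free equivalent supplied by the inductive hypothesis, again without introducing fresh variables.

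The main obstacle is handling positive recursion within a single stratum, where such a clause-by-clause unfolding does not close on itself. Here I would appeal to determinacy globally rather than locally: every ground atom that can occur at any stage of the least-fixed-point iteration for a query at tuple $\vec{a}$ has all of its arguments among the components of $\vec{a}$. The iteration is therefore confined to a finite set of ground atoms whose size is bounded by $|\mathcal{R}|$ and the maximal arity in $\Pi$ but is independent of the domain; it stabilises in a bounded number of steps, and its outcome at $\vec{a}$ is determined entirely by the truth values of the extensional $\mathcal{R}$-atoms (and the lower-stratum predicates, which by induction are themselves quantifier-free in $\mathcal{R}$) on components of $\vec{a}$. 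Hence the value of the least fixed point on $\vec{x}$ can be expressed as a Boolean combination of $\mathcal{R}$-literals whose arguments come from $\vec{x}$.

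Assembling the pieces gives a quantifier-free $\phi_{P}$ for each intensional $P$, so the abstract distribution induced by the determinate program is quantifier-free, and Proposition \ref{prop:QF_implies_projective} delivers the desired projectivity.
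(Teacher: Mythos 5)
Your proof is correct and takes essentially the paper's route: reduce to Proposition \ref{prop:QF_implies_projective} by arguing that the defining formulas $\phi_P$ of a determinate program are equivalent to quantifier-free formulas over $\mathcal{R}$. The only difference is one of explicitness: the paper merely asserts this correspondence (its displayed Fact \ref{fact:QF_to_dS-Datalog} is the converse direction, from quantifier-free formulas to determinate Datalog), whereas you actually prove the needed determinate-to-quantifier-free direction, handling recursion within a stratum by noting that, by determinacy, any derivation of $P(\vec{a})$ involves only ground atoms whose arguments are components of $\vec{a}$, so the fixed point over this domain-independent finite set of atoms stabilises in a bounded number of steps and depends only on the quantifier-free $\mathcal{R}$-type of $\vec{a}$. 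This fills in the detail the paper leaves implicit, so your argument is if anything more self-contained.
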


We now turn to the main result of this
subsection.
\begin{thm}
\label{thm:Probabilistic_Logic_Programs}Every probabilistic logic
program is asymptotically equivalent to an acyclic determinate probabilistic
logic program.
\end{thm}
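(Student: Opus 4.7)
The plan is to chain together the three main technical results already established: the translation from stratified Datalog to LFP (Fact \ref{fact:S-Datalog_to_LFP}), the asymptotic quantifier elimination for LFP (Theorem \ref{thm:QE_LFP_distribution}), and the translation from quantifier-free formulas back to acyclic determinate Datalog (Fact \ref{fact:QF_to_dS-Datalog}). The overall picture is a round trip from probabilistic logic programs to abstract LFP-distributions, through quantifier elimination, and back to probabilistic logic programs, where the entry and exit points of the round trip are asymptotically equivalent.

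In more detail, I would start with an arbitrary probabilistic logic program and look at its representation in the abstract distribution semantics: probabilistic facts become the root vocabulary $\mathcal{R}$ with associated rationals $q_{R}$, while every non-root predicate $P \in \mathcal{S}\setminus\mathcal{R}$ is assigned the formula $\phi_{P} = (\Pi,P)\vec{x}$ determined by the stratified Datalog part $\Pi$. Applying Fact \ref{fact:S-Datalog_to_LFP} predicate by predicate, I would replace each $\phi_{P}$ by an equivalent LFP formula $\psi_{P}(\vec{x})$ over $\mathcal{R}$, yielding an abstract LFP-distribution that induces exactly the same family of distributions as the original probabilistic logic program on every domain.

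Next, I would invoke Theorem \ref{thm:QE_LFP_distribution} to replace each $\psi_{P}$ by a quantifier-free first-order formula $\psi'_{P}$ over $\mathcal{R}$ such that the resulting quantifier-free abstract distribution $T'$ is asymptotically equivalent to the LFP distribution just constructed. Finally, I would apply Fact \ref{fact:QF_to_dS-Datalog} to translate each $\psi'_{P}$ into an equivalent acyclic determinate stratified Datalog program $\Pi'_{P}$ defining $P$ from $\mathcal{R}$. Assembling these programs (renaming any auxiliary intensional predicates to keep the unions disjoint) produces a single acyclic determinate Datalog program $\Pi'$, and combining $\Pi'$ with the original probabilistic facts gives an acyclic determinate probabilistic logic program whose abstract distribution agrees with $T'$, hence is asymptotically equivalent to the original program.

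The conceptual work has already been done in the earlier results, so what remains is mostly verifying that the various translations are compatible as maps between abstract distributions rather than just as logical equivalences. The point that requires the most attention is the final assembly step, where one must check that the union of the acyclic determinate programs $\Pi'_{P}$ can genuinely be taken as a single acyclic determinate program without destroying stratification or introducing spurious dependencies between the defined predicates; this is handled by making the auxiliary intensional vocabularies of the individual $\Pi'_{P}$ pairwise disjoint, so that the resulting dependency graph is simply the disjoint union of the acyclic graphs for the $\Pi'_{P}$ together with the edges from $P$ down to the probabilistic facts.
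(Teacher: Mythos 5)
Your proposal is correct and follows essentially the same route as the paper's own proof: translate the stratified Datalog definitions to LFP formulas via Fact \ref{fact:S-Datalog_to_LFP}, apply Theorem \ref{thm:QE_LFP_distribution} to obtain an asymptotically equivalent quantifier-free abstract distribution, and convert back to an acyclic determinate program via Fact \ref{fact:QF_to_dS-Datalog}. Your extra care about keeping the auxiliary intensional vocabularies disjoint when assembling the programs is a reasonable elaboration of a step the paper leaves implicit, but it does not change the argument.
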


\section{\label{sec:Discussion} Projective probabilistic logic programs}

As an application of our results, we investigate the projective families of distributions that are expressible by probabilistic logic programs.

The key is the following observation:

\begin{prop}
\label{prop:projective_and_AE_->_Equ}Two projective families of distributions
are asymptotically equivalent if and only if they are equal.
\end{prop}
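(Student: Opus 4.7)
The ``if'' direction is immediate: if $Q^{(n)} = Q'^{(n)}$ for every $n$, then the supremum in the definition of asymptotic equivalence is $0$ for every $n$, and the limit is trivially $0$.

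For the ``only if'' direction, fix an arbitrary $m \in \mathbb{N}$ and an arbitrary $\omega \in \Omega_m$. The plan is to use projectivity to pull the singleton probability at level $m$ up to a cylinder-set probability at level $n$ for each $n > m$, and then to squeeze the difference by the total variation distance. Concretely, for each $n > m$ define
\[
E_\omega^n \;\coloneqq\; \bigl\{ \omega' \in \Omega_n \;\big|\; \omega \text{ is the substructure of } \omega' \text{ with domain } \{1,\ldots,m\} \bigr\}.
\]
By projectivity applied to both families we have $Q^{(m)}(\{\omega\}) = Q^{(n)}(E_\omega^n)$ and $Q'^{(m)}(\{\omega\}) = Q'^{(n)}(E_\omega^n)$, so
\[
\bigl| Q^{(m)}(\{\omega\}) - Q'^{(m)}(\{\omega\}) \bigr| \;=\; \bigl| Q^{(n)}(E_\omega^n) - Q'^{(n)}(E_\omega^n) \bigr| \;\leq\; \sup_{A \subseteq \Omega_n} \bigl| Q^{(n)}(A) - Q'^{(n)}(A) \bigr|.
\]
The left-hand side does not depend on $n$, while the right-hand side tends to $0$ as $n\to\infty$ by hypothesis. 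Hence the left-hand side equals $0$, which gives $Q^{(m)}(\{\omega\}) = Q'^{(m)}(\{\omega\})$ for every $\omega \in \Omega_m$. Since $\Omega_m$ is finite and probability measures on a finite set are determined by their values on singletons, $Q^{(m)} = Q'^{(m)}$; as $m$ was arbitrary, the two families coincide.

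There is no real obstacle here: exchangeability is not needed beyond what projectivity already encodes, and the argument is a clean two-line reduction once one notices that projectivity identifies singletons at level $m$ with particular measurable sets at every larger level $n$. The only point that warrants care in the writeup is noting that $E_\omega^n$ is well-defined as soon as $n > m$, so that one can legitimately let $n \to \infty$ while keeping $m$ and $\omega$ fixed.
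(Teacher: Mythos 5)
Your proof is correct and follows essentially the same route as the paper's: both lift the singleton probability of a world $\omega$ at level $m$ to the cylinder set of its extensions at level $n$ via projectivity, and then bound the resulting constant difference by the total variation distance, which tends to $0$. The only difference is presentational — you argue directly for each $\omega$, while the paper phrases the same estimate as a proof by contradiction with a single witness world $\omega_0$.
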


As modelling in the distribution semantics often involves introducing auxiliary predicates, the family of distributions we want to model will usually be defined on a smaller vocabulary than the abstract distribution (or probabilistic logic program) itself.
We therefore note here that asymptotic equivalence
is preserved under reduct.
First we clarify how we build reducts
of distributions in the first place: 
\begin{defn}
Let $Q_{\text{\ensuremath{}}}^{(n)}$ be a distribution over a vocabulary
$\mathcal{S}$. Then its \emph{reduct} $Q_{\mathcal{S'}}^{(n)}$ \emph{to
a subvocabulary} $\mathcal{S'}\subseteq \mathcal{S}$ is defined such that for
any world $\omega\in\Omega_{n}^{\mathcal{S'}}$, $Q_{\mathcal{S'}}^{(n)}(\omega)\coloneqq Q^{(n)}(\{\omega'\in\Omega_{n}^{\mathcal{S}}|\omega'_{\mathcal{S'}}=\omega\})$. 
\end{defn}

\begin{rem*}
$Q_{\mathcal{T}}^{(n)}$ is the pushforward measure of $Q_{\text{\ensuremath{}}}^{(n)}$
with respect to the reduct projection from $\Omega_{n}^{\mathcal{S}}\rightarrow\Omega_{n}^{\mathcal{T}}$. 
\end{rem*}
We can now formulate preservation of asymptotic equivalence under
reducts:
\begin{prop}
\label{prop:AE-reducts}The reducts of asymptotically equivalent families
of distributions are themselves asymptotically equivalent.
\end{prop}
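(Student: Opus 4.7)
The plan is to unwind the definition of asymptotic equivalence and use the fact, recorded in the preceding remark, that the reduct is a pushforward measure along the projection $\pi_n : \Omega_n^{\mathcal{S}} \to \Omega_n^{\mathcal{S}'}$. The essential observation is that any event in $\Omega_n^{\mathcal{S}'}$ pulls back to an event in $\Omega_n^{\mathcal{S}}$, so taking a supremum over subsets of the smaller sample space cannot exceed the supremum over subsets of the larger one.

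Concretely, let $(Q^{(n)})$ and $(Q'^{(n)})$ be asymptotically equivalent families over $\mathcal{S}$, and let $\mathcal{S}' \subseteq \mathcal{S}$. Fix any $B \subseteq \Omega_n^{\mathcal{S}'}$. By the definition of the reduct (or equivalently, the pushforward characterisation in the remark),
\[
Q_{\mathcal{S}'}^{(n)}(B) = Q^{(n)}(\pi_n^{-1}(B)) \quad\text{and}\quad Q'^{(n)}_{\mathcal{S}'}(B) = Q'^{(n)}(\pi_n^{-1}(B)).
\]
Hence
\[
\bigl|Q_{\mathcal{S}'}^{(n)}(B) - Q'^{(n)}_{\mathcal{S}'}(B)\bigr| = \bigl|Q^{(n)}(\pi_n^{-1}(B)) - Q'^{(n)}(\pi_n^{-1}(B))\bigr| \leq \sup_{A \subseteq \Omega_n^{\mathcal{S}}} \bigl|Q^{(n)}(A) - Q'^{(n)}(A)\bigr|.
\]

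Taking the supremum over $B \subseteq \Omega_n^{\mathcal{S}'}$ on the left and then letting $n \to \infty$, the bound on the right tends to $0$ by hypothesis, and we conclude that the reducts are asymptotically equivalent.

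There is no real obstacle here; the only point worth double-checking is that $\pi_n^{-1}(B)$ is indeed a measurable subset of $\Omega_n^{\mathcal{S}}$ (trivial, since all sample spaces are finite and we use the discrete $\sigma$-algebra), and that the definition of the reduct given in the paper does agree with the pushforward along $\pi_n$, which is precisely the content of the remark immediately preceding the proposition.
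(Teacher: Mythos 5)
Your proposal is correct and follows essentially the same argument as the paper: the paper also identifies the reduct probability of an event $A\subseteq\Omega_n^{\mathcal{S}'}$ with the probability of its preimage $\{\omega\in\Omega_n^{\mathcal{S}}\mid\omega_{\mathcal{S}'}\in A\}$ under the original distributions, and then bounds the supremum over events in the reduct space by the supremum over events in the full space. Nothing further is needed.
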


In combination, we obtain:

\begin{thm}\label{thm:Proj_implies_QF}
  Let $L$ be a logical language with asymptotic quantifier elimination that extends quantifier-free first-order logic. Let $\mathcal{R} \subseteq \mathcal{S}$ be vocabularies, and let $\mathcal{S'} \subseteq \mathcal{S}$. Furthermore let $T$ be an $L$-distribution over $\mathcal{R}$ with vocabulary $\mathcal{S}$. Lastly, let $(Q^{(n)})$ be the family of distributions induced by $T$.
  
  Then the following holds: If $Q_{\mathcal{S'}}^{(n)}$ is projective, then there is a quantifier-free distribution $T_q$ over  $\mathcal{R}$ with vocabulary $\mathcal{S}$ such that $Q_{\mathcal{S'}}^{(n)}$ is the reduct of the family of distributions induced by $T_q$ to $\mathcal{S'}$.
\end{thm}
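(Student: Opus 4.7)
The plan is to combine the three main tools already established in the paper: asymptotic quantifier elimination for $L$, the fact that quantifier-free distributions induce projective families (Proposition \ref{prop:QF_implies_projective}), and the rigidity statement that asymptotically equivalent projective families must be equal (Proposition \ref{prop:projective_and_AE_->_Equ}).

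First I would use the asymptotic quantifier elimination assumption on $L$ to produce a quantifier-free distribution $T_q$ over $\mathcal{R}$ (with the same vocabulary $\mathcal{S}$) that is asymptotically equivalent to $T$. Writing $(Q_q^{(n)})$ for the family induced by $T_q$, asymptotic equivalence of abstract distributions (by definition) gives that $(Q^{(n)})$ and $(Q_q^{(n)})$ are asymptotically equivalent as families. Proposition \ref{prop:AE-reducts} then yields that the reducts $(Q_{\mathcal{S'}}^{(n)})$ and $((Q_q^{(n)})_{\mathcal{S'}})$ are asymptotically equivalent as well. By Proposition \ref{prop:QF_implies_projective}, $(Q_q^{(n)})$ is projective.

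The remaining step, and arguably the only place where a new little lemma is required, is to verify that the reduct of a projective family is projective; once this is in hand, both $(Q_{\mathcal{S'}}^{(n)})$ (by hypothesis) and $((Q_q^{(n)})_{\mathcal{S'}})$ are projective and asymptotically equivalent, so Proposition \ref{prop:projective_and_AE_->_Equ} forces them to be equal, which is exactly the conclusion. Preservation of projectivity under reducts is a direct computation: reduct and restriction to the subdomain $\{1,\ldots,m\}$ commute (both are pushforwards along coordinatewise projections on different axes), so for $\omega\in\Omega_m^{\mathcal{S'}}$,
\[
Q_{\mathcal{S'}}^{(m)}(\{\omega\}) \;=\; Q^{(m)}\bigl(\{\omega'\in\Omega_m^{\mathcal{S}}\mid \omega'_{\mathcal{S'}}=\omega\}\bigr),
\]
and applying projectivity of $(Q^{(n)})$ to each such $\omega'$ and resumming identifies this with $Q_{\mathcal{S'}}^{(n)}(\{\omega''\in\Omega_n^{\mathcal{S'}}\mid \omega''|_{\{1,\ldots,m\}}=\omega\})$; exchangeability is preserved in the same way, since a permutation of the domain commutes with taking the $\mathcal{S'}$-reduct.

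I expect the only real obstacle is bookkeeping around the interchange of reduct and subdomain restriction in the projectivity check; the remainder of the argument is essentially an assembly of already proved statements (asymptotic QE for $L$, Propositions \ref{prop:QF_implies_projective}, \ref{prop:AE-reducts}, and \ref{prop:projective_and_AE_->_Equ}).
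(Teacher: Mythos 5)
Your proposal is correct and follows essentially the same route as the paper's own proof: apply asymptotic quantifier elimination to obtain $T_q$, use Proposition \ref{prop:QF_implies_projective} and Proposition \ref{prop:AE-reducts}, and conclude with Proposition \ref{prop:projective_and_AE_->_Equ}. The only difference is that you spell out the (easy) fact that projectivity is preserved under reducts, which the paper uses implicitly when it declares both reduct families projective.
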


In particular, a projective family of distributions that can be expressed in probabilistic logic programming at all can in fact be expressed already by a determinate probabilistic logic program. 

\section{Implications and discussion}

The results have immediate consequences for the expressiveness of
probabilistic logic programming. 

We first discuss a particularly striking observation:

\subsection{Asymptotic loss of information}

Very insightful is the case of a probabilistic \emph{rule}, i.e. a
clausal formula annotated with a probability. Because of its intuitive
appeal, this is a widely used syntactic element of probabilistic logic
programming languages such as Problog, and its semantics is defined
by introducing a new probabilistic fact to model the uncertainty of
the rule. More precisely:
\[
p::R(\vec{x}) \coloneq Q_{1}(\vec{x}_{1},\vec{y}_{1}),\ldots,Q_{n}(\vec{x}_{n},\vec{y}_{n})
\]
(where $\vec{x}$ are the variables appearing in $R$, $\vec{x_{i}}\subseteq\vec{x}$)
is interpreted as
\[
p::I(\vec{x},\vec{y});R(\vec{x}) \coloneq Q_{1}(\vec{x}_{1},\vec{y}_{1}),\ldots,Q_{n}(\vec{x}_{n},\vec{y}_{n}),I(\vec{x},\vec{y})
\]
(where $\vec{y}\coloneqq\bigcup\vec{y}_{i}$).

It is now easy to see from Proposition \ref{prop:form_QE} that in
the asymptotic quantifier-free representation of this probabilistic
rule, $I$ will no longer occur, since it originally occurred implicitly
quantified in the body of the clause. However, $I$ was the only connection
between the probability annotation of the rule and its semantics!
Therefore, the asymptotic probability of $R(\vec{x})$ is independent
of the probability assigned to any non-determinate rule with
$R(\vec{x})$ as its head. 

\subsection{Expressing projective families of distributions}

Our results also show how few of the projective families of distributions
can be expressed in those formalisms. This confirms the suspicion
voiced in by Jaeger and Schulte \shortcite{JaegerS20} that despite the ostensible similarities
between languages such as independent choice logic, which are based
on the distribution semantics, and the array representation introduced by Jaeger and Schulte \shortcite{JaegerS20}, a direct application of techniques from probabilistic logic programming to general projective families of distributions might prove challenging.

We start by displaying some properties shared by the projective distributions induced by a probabilistic logic program.

\begin{defn}
  A projective family of distributions has the \emph{Independence Property} or \emph{IP} if for all $\mathcal{S}$-formulas $\varphi(x_1, \dots x_n)$ and $\psi(x_1, \dots x_m)$  the events $\{1, \dots, n\} \models \varphi$ and $\{n+1, \dots, n+m\} \models \psi $ are independent under $Q^{(n+m)}$.
  A projective family of distributions $(Q^{(n)})$ of $\mathcal{S}$-structures  has the \emph{Conditional Independence-Property} or \emph{CIP} if for all $n$ and all quantifier-free $\mathcal{S}$-formulas $\varphi(x_1, \dots x_n)$ and every $\mathcal{S}$-structure $\omega$ on a domain with $n-1$ elements, the events  $\{1, \dots, n\} \models \varphi $ and $\{1, \dots, n+1\} \setminus \{n\} \models \varphi $ are conditionally independent over $\{1, \dots, n-1\} \models \omega$ under $Q^{(n+1)}$.
\end{defn}

IP has been studied extensively in the field of pure inductive logic \cite{ParisV15}, while CIP is a generalisation of the property that Jaeger and Schulte \shortcite{JaegerS20} claimed in their Proposition 7.1 for AHK- distributions, to arbitrary quantifier-free formulas rather than worlds.

\begin{exam}\label{exam:CIP-IP}
  Consider the quantifier-free abstract distribution with a probabilistic fact $R(x)$ with associated probability $p$ and a binary predicate $P(x,y)$ with definition $\phi_P$ = $x = y \vee R(x)$. Then its induced family of distributions satisfies CIP and IP. 
  However, the reduct to the vocabulary $\{P\}$ does not satisfy CIP; indeed, consider the domain with elements $\{1, 2, 3\}$. Then there is just one $\{P\}$-structure $\omega$ with domain $\{1\}$ that has probability 1, namely the world where $P(1,1)$ is true. Consider the events $P(1,3)$ and $P(1,2)$. They are not independent, since in fact $P(1,2)$ iff $R(1)$ iff $P(1,3)$. Since there is just one possible $\{P\}$-structure  $\omega$ on $\{1\}$, conditioning on $\omega$ does not alter the probabilities.  
\end{exam}

\begin{prop} \label{prop:CIP-IP}
  Let $(Q^{(n)})$ be a projective family of distributions induced by a quantifier-free abstract distribution. Then  $(Q^{(n)})$ satisfies CIP. If it does not have any nullary relation symbols, it also satisfies IP.
\end{prop}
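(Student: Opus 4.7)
The plan is to exploit the locality of quantifier-free formulas together with the mutual independence of the $\mathcal{R}$-atoms in the quantifier-free distribution $T$. The key observation is that, since the vocabulary is relational and each defining formula $\phi_R(\vec{x})$ is quantifier-free, the atoms occurring in $\phi_R$ have arguments drawn only from $\vec{x}$ (together possibly with nullary $\mathcal{R}$-atoms). Consequently, for any subset $A$ of the domain, the induced $\mathcal{S}$-substructure on $A$ is a deterministic function of (i) the $\mathcal{R}$-atoms on tuples from $A$ and (ii) the nullary $\mathcal{R}$-atoms; hence for any $\mathcal{S}$-formula $\varphi$ the event $\{a_1, \ldots, a_k\} \models \varphi(a_1, \ldots, a_k)$ depends only on this substructure.

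For CIP, I would fix a quantifier-free $\varphi(x_1, \ldots, x_n)$ and a world $\omega$ with $Q^{(n+1)}(\{1, \ldots, n-1\} \models \omega) > 0$ (the degenerate case being vacuous). The two events of interest are determined by the substructures on $\{1, \ldots, n\}$ and on $\{1, \ldots, n-1, n+1\}$ respectively. Since $\mathcal{R} \subseteq \mathcal{S}$, the conditioning event pins down every $\mathcal{R}$-atom on a tuple from $\{1, \ldots, n-1\}$ together with every nullary $\mathcal{R}$-atom. Conditional on this information, the first event depends only on $\mathcal{R}$-atoms on tuples from $\{1, \ldots, n\}$ that contain $n$, while the second depends only on $\mathcal{R}$-atoms on tuples from $\{1, \ldots, n-1, n+1\}$ that contain $n+1$. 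These two sets of atoms are disjoint, and the mutual independence of the $\mathcal{R}$-atoms under $T$ is preserved by conditioning on a disjoint subset of them, yielding the desired conditional independence.

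For IP, the events $\{1, \ldots, n\} \models \varphi$ and $\{n+1, \ldots, n+m\} \models \psi$ are determined by the substructures on $\{1, \ldots, n\}$ and $\{n+1, \ldots, n+m\}$ respectively. Under the hypothesis that there are no nullary relation symbols, these substructures depend only on the $\mathcal{R}$-atoms whose tuples lie in the respective set, and these two collections are disjoint and jointly independent as Bernoulli variables under $T$; independence of the events is then immediate.

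The main subtlety I expect is the bookkeeping around nullary atoms: the locality argument decouples the defining formulas only up to nullary $\mathcal{R}$-atoms, which under IP would couple the two events through the $\phi_R$ (a phenomenon in the same spirit as the coupling observed in Example \ref{exam:CIP-IP}), motivating the no-nullary hypothesis; under CIP they are harmlessly absorbed into the conditioning on $\omega$. A minor additional point is to note that for quantifier-free $\varphi$, truth in the substructure on $\{1, \ldots, n\}$ coincides with truth of $\varphi(1, \ldots, n)$ in the ambient $Q^{(n+1)}$-structure, so there is no ambiguity in what the events mean.
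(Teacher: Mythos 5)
Your proof is correct and follows essentially the same route as the paper's: exploit that quantifier-freeness makes the $\mathcal{S}$-substructure on a set a function of the $\mathcal{R}$-atoms over that set (plus nullary atoms), note that conditioning on $\omega$ fixes all $\mathcal{R}$-atoms over $\{1,\dots,n-1\}$, and conclude from the disjointness and mutual independence of the remaining atoms (those containing $n$ versus $n+1$). The only difference is presentational -- you argue semantically via measurability with respect to atom collections, whereas the paper performs the same reduction syntactically by substituting the defining formulas and replacing determined atoms by $\top$ or $\bot$ -- and your explicit handling of the nullary and probability-zero cases is a welcome bit of extra care.
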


As mentioned above, one often expands the vocabulary of interest when modelling in the distribution semantics. It is worth noting, therefore, that IP is trivially transferred to reducts, while CIP is not (see Example \ref{exam:CIP-IP} above). 
We can view our results as positive or negative, depending on our viewpoint. We will begin with the positive formulation:

\begin{cor} \label{cor:Proj_CIP}
  If a projective family of distributions is induced by a probabilistic logic program, it satisifies CIP.
\end{cor}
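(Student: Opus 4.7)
The plan is to combine Theorem~\ref{thm:Proj_implies_QF} with Proposition~\ref{prop:CIP-IP}. A probabilistic logic program is represented in the abstract distribution semantics as an abstract LFP distribution $T$ over the probabilistic facts $\mathcal{R}$ with full vocabulary $\mathcal{S}$; Fact~\ref{fact:S-Datalog_to_LFP} justifies that each derived predicate is indeed definable by an LFP formula over $\mathcal{R}$. By hypothesis the induced family $(Q_T^{(n)})$ on $\mathcal{S}$ is projective.

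First I would invoke Theorem~\ref{thm:Proj_implies_QF} with $L$ equal to least fixed point logic (which extends quantifier-free first-order logic and, by Theorem~\ref{thm:QE_LFP_distribution}, has asymptotic quantifier elimination), taking $\mathcal{S'}=\mathcal{S}$. With this choice of subvocabulary the ``reduct'' is the family itself, so the theorem supplies a quantifier-free abstract distribution $T_q$ over $\mathcal{R}$ with vocabulary $\mathcal{S}$ whose induced family coincides with $(Q_T^{(n)})$ on the nose.

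Second, I would apply Proposition~\ref{prop:CIP-IP} to this projective family, now exhibited as being induced by the quantifier-free distribution $T_q$; its conclusion is exactly CIP, and the corollary follows.

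The only delicate point is the choice $\mathcal{S'}=\mathcal{S}$: Example~\ref{exam:CIP-IP} already shows that CIP can be destroyed by passage to a proper reduct, so one must work on the full vocabulary on which projectivity is asserted rather than reducing to some user-chosen vocabulary of interest. Once that observation is in place, no further calculation is required, as the corollary is a direct bookkeeping combination of the two named results.
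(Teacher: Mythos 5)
Your argument is correct and matches the paper's own proof: the paper likewise observes that such a family is in fact induced by a quantifier-free (determinate) distribution on the full vocabulary --- which is exactly your application of Theorem~\ref{thm:Proj_implies_QF} with $\mathcal{S'}=\mathcal{S}$ --- and then concludes CIP via Proposition~\ref{prop:CIP-IP}. Your remark that one must stay on the full vocabulary because CIP is not preserved under reducts is also in line with the paper's discussion of Example~\ref{exam:CIP-IP}.
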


As CIP is a generalisation of the property claimed by Jaeger and Schulte \shortcite{JaegerS20} in their Proposition 7.1, this shows that while the class of AHK- representations does not satisfy this property (see the discussion in the appendix to Jaeger and Schulte's corrected version \shortcite{JaegerS20a}), every projective family of distributions induced by a probabilistic logic program does.

Since CIP does not transfer to reducts, however, we look towards IP for a property that all projective families of distributions expressible in probabilistic logic programming satisfy.

\begin{cor}\label{cor:Proj_IP}
  Let $\mathcal{S'} \subseteq \mathcal{S}$ be relational vocabularies without nullary relation symbols. Then for every probabilistic
logic program with vocabulary $\mathcal{S}$, if the reduct $(Q^{(n)}_{\mathcal{S'}})$  is projective,  $(Q^{(n)}_{\mathcal{S'}})$ satisfies IP.
\end{cor}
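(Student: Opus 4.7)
The plan is to assemble three earlier results: Theorem \ref{thm:Proj_implies_QF}, which allows us to replace any projective reduct of an $L$-distribution (for $L$ an asymptotic-quantifier-eliminating extension of quantifier-free first-order logic) by a quantifier-free distribution on the same vocabulary; Proposition \ref{prop:CIP-IP}, which furnishes IP for a family induced by a quantifier-free abstract distribution whose vocabulary contains no nullary relation symbols; and the observation made just before the statement that IP is trivially preserved under reducts. Nothing new needs to be invented; the job is to feed the inputs into the theorems in the right order.

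Concretely, I would first recast the given probabilistic logic program as an abstract LFP-distribution $T$ over its probabilistic facts $\mathcal{R}\subseteq\mathcal{S}$, using Fact \ref{fact:S-Datalog_to_LFP} to translate the stratified Datalog definitions of the derived predicates into LFP-formulas over $\mathcal{R}$. Since LFP has asymptotic quantifier elimination (Theorem \ref{thm:QE_LFP_distribution}) and visibly extends quantifier-free first-order logic, and since by assumption $(Q^{(n)}_{\mathcal{S'}})$ is projective, Theorem \ref{thm:Proj_implies_QF} applies and yields a quantifier-free distribution $T_q$ over $\mathcal{R}$ with vocabulary $\mathcal{S}$ whose induced family has reduct to $\mathcal{S'}$ equal to $(Q^{(n)}_{\mathcal{S'}})$.

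Next I would apply Proposition \ref{prop:CIP-IP} to $T_q$: the family $(Q^{(n)}_{T_q})$ is projective by Proposition \ref{prop:QF_implies_projective}, and since $\mathcal{S}$ has no nullary relation symbols by hypothesis, the proposition delivers IP for $(Q^{(n)}_{T_q})$. Finally, because every $\mathcal{S'}$-formula is in particular an $\mathcal{S}$-formula and its probability under the reduct agrees with its probability under $(Q^{(n)}_{T_q})$, the IP witnessed on the larger vocabulary transfers to $(Q^{(n)}_{\mathcal{S'}})$, which is the desired conclusion.

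The heavy lifting has already been done in Theorem \ref{thm:Proj_implies_QF} and Proposition \ref{prop:CIP-IP}, so there is no substantial obstacle; the only point to watch is to check the hypotheses of Theorem \ref{thm:Proj_implies_QF} carefully in the LFP setting (that LFP extends quantifier-free first-order logic, and that the projective object of interest is the $\mathcal{S'}$-reduct rather than the full $(Q^{(n)})$), and to spell out the otherwise routine fact that IP passes to reducts.
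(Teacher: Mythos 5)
Your proposal is correct and follows essentially the same route as the paper: the paper's own proof is a one-liner asserting that IP holds on the full vocabulary and transfers to reducts, which implicitly relies on exactly the chain you spell out (Theorem \ref{thm:Proj_implies_QF} to pass to a quantifier-free distribution over $\mathcal{S}$, Proposition \ref{prop:CIP-IP} for IP in the absence of nullary symbols, and the remark that IP is preserved under reducts). Your version simply makes these implicit appeals explicit, which is a faithful, and if anything more careful, rendering of the intended argument.
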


If we allow nullary relations in $\mathcal{S}$, we obtain finite sums of distributions with IP instead.

\begin{prop}\label{prop:finite_sums_of_IP}
  Let $\mathcal{S'} \subseteq \mathcal{S}$ be relational vocabularies, possibly with nullary relation symbols. Then for every probabilistic
logic program with vocabulary $\mathcal{S}$, if the reduct $(Q^{(n)}_{\mathcal{S'}})$  is projective,  $(Q^{(n)}_{\mathcal{S'}})$ is a finite sum of distributions satisfying IP. 
\end{prop}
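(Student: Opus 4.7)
The plan is to reduce to the nullary-free setting by conditioning on the values of the nullary relations in $\mathcal{R}$. Since $(Q^{(n)}_{\mathcal{S}'})$ is projective, Theorem \ref{thm:Proj_implies_QF} yields a quantifier-free distribution $T_q$ over $\mathcal{R}$ with vocabulary $\mathcal{S}$ whose reduct to $\mathcal{S}'$ coincides with $(Q^{(n)}_{\mathcal{S}'})$.

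Enumerate the (finitely many) nullary symbols of $\mathcal{R}$ as $N_1, \dots, N_k$. For each truth assignment $\sigma\colon \{N_1,\dots,N_k\}\to \{0,1\}$, let $T_q^\sigma$ be obtained from $T_q$ by replacing each $q_{N_i}$ by $\sigma(N_i)\in\{0,1\}$, and set $\pi(\sigma) := \prod_{i=1}^k q_{N_i}^{\sigma(N_i)}(1-q_{N_i})^{1-\sigma(N_i)}$. A direct computation from the definition of the induced distribution --- essentially the law of total probability applied to the nullary atoms --- gives
\[
Q^{(n)}_{T_q} \;=\; \sum_\sigma \pi(\sigma)\, Q^{(n)}_{T_q^\sigma},
\]
which is preserved by the reduct to $\mathcal{S}'$. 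Each $T_q^\sigma$ is quantifier-free, so by Proposition \ref{prop:QF_implies_projective} each $Q^{(n)}_{T_q^\sigma}$ is projective, as is its $\mathcal{S}'$-reduct. Since $\pi$ is a probability distribution on the finite set of assignments, it remains to show that each $(Q^{(n)}_{T_q^\sigma})_{\mathcal{S}'}$ satisfies IP.

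For this, observe that under $T_q^\sigma$ every nullary of $\mathcal{R}$ is deterministic; moreover every nullary $R\in\mathcal{S}\setminus\mathcal{R}$ is defined by a quantifier-free sentence over $\mathcal{R}$, which in a purely relational vocabulary is a Boolean combination of nullary $\mathcal{R}$-atoms and hence also deterministic. Substituting these fixed values into the definitions $\phi_R$ of the non-nullary $R\in\mathcal{S}\setminus\mathcal{R}$ yields a quantifier-free distribution $T_q^{\sigma,\circ}$ over the non-nullary part of $\mathcal{R}$ with target vocabulary the non-nullary part of $\mathcal{S}$. Its induced family coincides with the corresponding reduct of $(Q^{(n)}_{T_q^\sigma})$ and has no nullary relation symbols, so Proposition \ref{prop:CIP-IP} applies to give IP. For arbitrary $\mathcal{S}'$-formulas $\varphi(\vec{x})$ and $\psi(\vec{y})$, substituting the fixed truth values of all nullaries yields equivalent formulas in the nullary-free fragment of $\mathcal{S}'$, from which independence of the corresponding events follows. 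The main bookkeeping step is precisely the observation that nullaries in $\mathcal{S}\setminus\mathcal{R}$ are fully determined once the nullaries of $\mathcal{R}$ are fixed; after that, the argument closely mirrors the proof of Corollary \ref{cor:Proj_IP}.
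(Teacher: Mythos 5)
Your proposal is correct and follows essentially the same route as the paper's proof: pass to the quantifier-free (determinate) representation via Theorem \ref{thm:Proj_implies_QF}, decompose the induced distribution as a finite convex combination by conditioning on the configurations of the nullary probabilistic facts, and observe that each conditional component is a nullary-free quantifier-free distribution to which Proposition \ref{prop:CIP-IP} and the reduct argument of Corollary \ref{cor:Proj_IP} apply. Your version merely spells out explicitly the law-of-total-probability decomposition and the fact that nullaries in $\mathcal{S}\setminus\mathcal{R}$ become deterministic, which the paper compresses into ``substituting $\top$ or $\bot$ for the nullary propositions''.
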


It is natural to ask how strong the condition imposed by the previous results is, bringing us to the negative part of our results.
As a special case, we consider a unary vocabulary $\mathcal{S'}$, which only has unary relation symbols, since the projective families of distributions are very well understood for such vocabularies.

Here, \emph{de Finetti's Representation Theorem} \cite[Chapter 9]{ParisV15} says that the projective families of distributions in a unary vocabulary are precisely the \emph{potentially infinite} combinations of those that satisfy IP, while those projective families of distributions expressible in probabilistic logic programs are merely the \emph{finite} combinations of those satisfying IP; so, in some sense ``almost all'' projective families of distributions in unary vocabularies cannot be expressed in probabilistic logic programming. 

As a concrete example, we show that already in the very limited vocabulary of a
single unary relation symbol $R$, there is no probabilistic logic
program that induces the distribution that is uniform on isomorphism
classes of structures:
\begin{defn}
\label{def:Carnap's function}Let $\mathcal{S}\coloneqq\{R\}$ consist
of one unary predicate, and let $\mathfrak{m}^{*}$ be the family
of distributions on $\mathcal{S}$-structures defined by 

$\mathfrak{m}^{*}(\{\omega\})\coloneqq\frac{1}{|D|*N_{\omega}}$ for
a world $\omega\in\Omega_{D}$, where $N_{\omega}\coloneqq\left|\left\{ \omega'\in\Omega_{D}|\omega\cong\omega'\right\} \right|$.

This gives each isomorphism type of structures equal weight, and then
within each isomorphism type every world is given equal weight too.
\end{defn}

$\mathfrak{m}^{*}$ is an important probability measure for two reasons;
it plays a special role in finite model theory since the so-called
unlabeled 0-1 laws are introduced with respect to this measure. Furthermore,
it was introduced explicitly by Carnap \shortcite{Carnap50,Carnap52} as
a candidate measure for formalising inductive reasoning, as part of
the so-called \emph{continuum of inductive methods}.
Paris and Vencovsk\'{a} \shortcite{ParisV15} provide a modern exposition of Carnap's theory. 

$\mathfrak{m}^{*}$ is easily seen to be exchangeable; it is also projective, and in
fact an elementary calculation shows that for any domain $D$ and
any $\left\{ a_{1},\ldots a_{n+1}\right\} \subseteq D$, 
\begin{equation}
\mathfrak{m}^{*}\left(R(a_{n+1})|\left\{ R(a_{i})\right\} _{i\in I\subseteq\{1,\ldots,n\}}\cup\left\{ \neg R(a_{i})\right\} _{i\in\{1,\ldots,n\}\backslash I}\right)=\frac{\left|I\right|+1}{n+1}\label{eq:m*}
\end{equation}
(see any of the sources above for a derivation). 
\begin{prop}\label{prop:Carnap_function}
Let $\mathcal{S}'$ be a finite vocabulary extending $\mathcal{S}$
from Definition \ref{def:Carnap's function}. Then there is no probabilistic
logic program with vocabulary $\mathcal{S}'$ such that the reduct
of the induced family of distributions to $\mathcal{S}$ is equal
to $\mathfrak{m}^{*}$.

\end{prop}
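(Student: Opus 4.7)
The plan is to argue by contradiction using Proposition~\ref{prop:finite_sums_of_IP}. Suppose that some probabilistic logic program over a vocabulary $\mathcal{S'}$ extending $\mathcal{S} = \{R\}$ induced $\mathfrak{m}^*$ as its reduct to $\mathcal{S}$. Since $\mathfrak{m}^*$ is projective, Proposition~\ref{prop:finite_sums_of_IP} would provide a decomposition $\mathfrak{m}^* = \sum_{i=1}^K \lambda_i Q_i$, with weights $\lambda_i > 0$ summing to $1$ and each family $Q_i$ satisfying IP. Because $\mathcal{S}$ consists only of the unary predicate $R$, IP forces the atoms $R(1),\ldots,R(n)$ to be mutually independent under $Q_i^{(n)}$ for every $n$, so each $Q_i^{(n)}$ is a product of independent Bernoullis with parameters $p_{ij}^{(n)} := Q_i^{(n)}(R(j))$.

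The heart of the argument is a concentration-of-measure observation: such a finite mixture of product-Bernoulli families cannot reproduce the near-uniform distribution of proportions characteristic of $\mathfrak{m}^*$. Let $S_n := |\{j \leq n : R(j)\}|$ and $\bar{p}_i^{(n)} := \frac{1}{n}\sum_{j=1}^n p_{ij}^{(n)}$. Under $Q_i^{(n)}$, $S_n$ is a sum of independent Bernoullis with mean $n\bar{p}_i^{(n)}$ and variance at most $n/4$, so Chebyshev's inequality yields
\[
Q_i^{(n)}\bigl(|S_n/n - \bar{p}_i^{(n)}| \geq \varepsilon\bigr) \leq \frac{1}{4n\varepsilon^2}
\]
for every $\varepsilon > 0$. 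Meanwhile, the defining property that $\mathfrak{m}^*$ weights each isomorphism type of $\{R\}$-structure equally gives $\mathfrak{m}^*(S_n = k) = 1/(n+1)$ for every $k \in \{0,\ldots,n\}$, so under $\mathfrak{m}^*$ the proportion $S_n/n$ is asymptotically uniform on $[0,1]$.

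To close the argument I would use pigeonhole: the $K$ values $\bar{p}_1^{(n)},\ldots,\bar{p}_K^{(n)}$ partition $[0,1]$ into at most $K+1$ subintervals, one of which has length at least $1/(K+1)$, and its middle third is an interval $[a_n, b_n]$ of length $\varepsilon := 1/(3(K+1))$ whose distance from every $\bar{p}_i^{(n)}$ is at least $\varepsilon$. Summing the Chebyshev estimates,
\[
\mathfrak{m}^*(S_n/n \in [a_n, b_n]) = \sum_{i=1}^K \lambda_i Q_i^{(n)}(S_n/n \in [a_n, b_n]) \leq \frac{1}{4n\varepsilon^2} \longrightarrow 0,
\]
whereas the uniformity of $S_n$ under $\mathfrak{m}^*$ forces $\mathfrak{m}^*(S_n/n \in [a_n, b_n]) \to \varepsilon > 0$, a contradiction. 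The main subtlety I anticipate is that the components $Q_i$ produced by Proposition~\ref{prop:finite_sums_of_IP} are not guaranteed to be exchangeable or projective, so the $p_{ij}^{(n)}$ may depend non-trivially on both $j$ and $n$; but Chebyshev relies only on independence of the atoms at a fixed $n$, which IP already provides, so no further assumption on the decomposition is needed.
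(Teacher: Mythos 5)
Your proof is correct, but it takes a genuinely different route from the paper's. Both arguments start the same way, invoking Proposition~\ref{prop:finite_sums_of_IP} to write $\mathfrak{m}^*$ as a finite convex combination of families satisfying IP. From there the paper exploits the explicit conditional probabilities of Equation~\ref{eq:m*}: given one positive and many negative instances, $\mathfrak{m}^*(R(a_{n+1})\mid\cdot)$ tends to $0$, whereas under the mixture this conditional probability is a weighted mean of the nonzero per-component marginals $p_P$ of $R(x)$ (components with $p_P=0$ get posterior weight $0$), hence bounded below by the least nonzero $p_P$ --- a contradiction. That argument leans on the fact that each component has a single well-defined marginal $p_P$, i.e.\ on the projectivity/exchangeability of the components, which the paper gets because they arise from determinate programs. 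You instead exploit a different feature of $\mathfrak{m}^*$, namely that the number of positive atoms $S_n$ is uniform on $\{0,\dots,n\}$, and contrast it with a Chebyshev concentration bound for each IP component plus a pigeonhole choice of a fixed-length interval avoiding all the component means; your observation that IP alone (iterated over initial-segment splits, which does yield full factorisation of the atomic events) suffices, with no exchangeability or projectivity of the components needed, is accurate and makes your version somewhat more robust, at the cost of a longer quantitative argument where the paper gets away with a two-line comparison of conditional probabilities. The only step worth spelling out in a final write-up is the derivation of mutual independence of $R(1),\dots,R(n)$ from the stated form of IP, since the definition only speaks of a single split into an initial and a final segment.
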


\subsection{Complexity results}
Since the theory of random structures is decidable, the asymptotic quantifier results in this paper provide us with an algorithmic procedure for determining an asymptotically equivalent acyclic determinate program for any given probabilistic logic program, and by extension a procedure for determining the asymptotic probabilities of quantifier-free queries.   
What can we say about the complexity of this procedure?
Since the operation takes a non-ground probabilistic logic
program as input and computes another probabilistic logic program,
the notion of data complexity does not make sense in this context.
Instead, program complexity is an appropriate measure. 

In our context, the input program could be measured in different ways.
Since our analysis is based on the setting of abstract distributions,
we will be considering as our input abstract distributions obtained
from (stratified) probabilistic logic programs . We will furthermore
fix our vocabularies $\mathcal{R}$ and $\mathcal{S}$. Since the transformation
acts on each $\phi_{R}$ in turn and independently, it suffices to
consider the individual $\phi_{R}$ as input. It is natural to ask
about complexity in the \emph{length} of $\phi_{R}$.

In fact, one can extract upper and lower bounds from the work of Blass et al. \shortcite{BlassGK85}, who build on the work of Grandjean \shortcite{Grandjean83} for analysing the complexity of their asymptotic results.
The task of determining whether the probability
of a first-order sentence converges to 0 or 1 with increasing domain
size, which is a special case of our transformation, is complete in
PSPACE \cite[Theorem 1.4]{BlassGK85}. Therefore the program transformation
is certainly PSPACE-hard. On the other hand, asymptotic elimination
of quantifiers in least fixed point logic is complete in EXPTIME \cite[Theorems 4.1 and 4.3]{BlassGK85}, so the program transformation is certainly
in EXPTIME. 

In order to specify further, we note that for abstract first-order
distributions, which correspond to acyclic probabilistic logic programs,
the transformation can be performed in PSPACE:

Let $R$ be of arity $n$. Then enumerate the (finitely many) quantifier-free
$n$-types $\left(\varphi_{i}\right)$ in $\mathcal{R}$. Now for
any $\phi_{R}$of arity $n$ we can check successively in polynomial
space in the length of $\phi_{R}$, whether the probability of $\varphi_{i}\rightarrow\phi_{R}$converges
to 0 or 1. Then $\phi_{R}$ is asymptotically equivalent to the conjunction
of those quantifier-free $n$-types for which 1 is returned. 

In the general case of least fixed point logic, Blass et al. \shortcite{BlassGK85}
show that the problem of finding an asymptotically equivalent first-order
sentence is EXPTIME complete. However, to represent stratified Datalog,
only the fragment known as \emph{bounded }or \emph{stratified }least
fixed point logic is required \cite[Sections 8.7 and 9.1]{EbbinghausF06}.
Therefore, the complexity class of the program transformation of stratified
probabilistic logic programs corresponds to the complexity of the
asymptotic theory of \emph{bounded} fixed point logic, which to the
best of our knowledge is still open.

\section{Conclusion and further Work}

By introducing the formalism of abstract distributions, we have related the asymptotic analysis of finite model theory to the distribution semantics underlying probabilistic logic programming.
Thereby, we have shown that every probabilistic logic program is asymptotically equivalent to an acyclic determinate logic program.
In particular, this representation provides us with an algorithm to evaluate the asymptotic probabilities of quantifier-free queries with respect to a probabilistic logic program.
We have also seen that the asymptotic representation of a probabilistic logic program invoking probabilistic rules is in fact independent of the probability with which the rule is annotated.
We applied our asymptotic results to study the projective families of distributions that can be expressed in probabilistic logic programming. 
We saw that they have certain independence properties, and that in particular the families of distributions induced on the entire vocabulary satisfy the conditional independence property.
We also see that at least in the case of a unary vocabulary, only a minority of projective families of distributions can be represented, excluding important example such as Carnap's family of distributions $\mathfrak{m}^*$.  

\subsection{Further work}

The analysis presented here suggests several strands of further research. 

While some widely used directed frameworks can be subsumed under the
probabilistic logic programming paradigm, undirected models such as Markov
logic networks (MLNs) seem to require a different approach.
Indeed, the projective fragment of MLNs isolated by Jaeger and Schulte \shortcite{JaegerS18} is particularly
restrictive, since it only allows formulas in which every literal
has the same variables. Those are precisely the $\sigma$-determinate formulas discussed by Domingos and Singla \shortcite{DomingosS07};
cf. also the parametric classes of finite model theory \cite[Section 4.2]{EbbinghausF06}. It might therefore be expected
that if an analogous result to Theorem \ref{thm:Probabilistic_Logic_Programs}
holds for MLNs, they could express even fewer projective families
of distributions than probabilistic logic programs.

Beyond the FOL or LFP expressions used in current probabilistic logic
programming, another direction is to explore languages with more expressive
power. Candidates for this are for instance Keisler's \shortcite{Keisler85} logic with probability
quantifiers or Koponen's \shortcite{Koponen20} conditional probability
logic. Appropriate asymptotic quantifier elimination
results have been shown in both settings \cite{Koponen20,KeislerL09},
allowing an immediate application of our results there.

The asymptotic quantifier elimination presented here excludes 
higher-order programming constructs from our probabilistic logic programs.
Investigating the asymptotic theory of impredicative programs under a formalised semantics
such as that presented by Bry \shortcite{Bry20} could have direct consequences for the expressiveness of such more general probabilistic logic programs.   

Finally, the failure of the classical paradigm under investigation
to express general projective families of distributions suggests one
may have to look beyond the current methods and statistical relational frameworks
to address the challenge of learning and inference for general projective
families of distributions issued by Jaeger and Schulte \shortcite{JaegerS20}.

\bibliographystyle{acmtrans}
\bibliography{Probabilisticlogicbib}

\appendix
\section{Proofs}

In the appendix we collate the proofs of the claims made in the paper.

\begin{proof}[Proof of Proposition \ref{prop:QF_implies_projective}]
  We have seen that the independent distribution induced on the space of $\mathcal{R}$-structures with domain $D$ is projective. Additionally, closed quantifier-free formulas hold in a substructure if and only if they hold in the original structure. So let $\omega$ be an $\mathcal{S}$-structure with domain $D$ and let $D \subset D'$. Let $\omega_{\mathcal(R)}$ be the $\mathcal{R}$-structure on $\omega$.
  If $\omega$ has probability $0$ because $\omega \models \exists_{\vec{a}\in\vec{D}}\exists_{R\in\mathcal{S} \setminus \mathcal{R}}:R(\vec{a})\nLeftrightarrow\phi_{R}(\vec{a})$, then so will all superstructures of $\omega$ since existential formulas are closed under superstructure. So assume this is not the case. 
  Then
\[
  Q_T^{D'}\left(\left\{ \omega_{\mathcal(R)}'\in\Omega_{n}|\textrm{\ensuremath{\omega_{\mathcal(R)}} is the substructure of }\omega_{\mathcal(R)}'\textrm{ with domain \ensuremath{\left\{  1,\ldots,n\right\} } }\right\} \right) = Q_T^{D}(\omega_{\mathcal(R)}).
\]

  Every $\omega_{\mathcal(R)}'$ has a unique extension to an $\mathcal{S}$-structure $\omega'$ with  $R(\vec{a})\nLeftrightarrow\phi_{R}(\vec{a})$ for all $R \in \mathcal{S} \setminus \mathcal{R}$. Since quantifier-free formulas with values in $\omega_{\mathcal(R)}$ are true in $\omega_{\mathcal(R)}'$ if and only if they are true in $\omega_{\mathcal(R)}$, those are exactly the extensions of $\omega$ to $D'$ that have non-zero weight.  
\end{proof}

\begin{proof}[Proof of Proposition  \ref{prop:form_QE}]
  The first claim follows from the fact that $\mathrm{RANDOM}(\mathcal{T})$ is the reduct of  $\mathrm{RANDOM}(\mathcal{R})$ to $\mathcal{T}$ for any $\mathcal{T}\subseteq\mathcal{R}$.

To show the second claim, consider the vocabulary $\bar{\mathcal{R}}$
containing $R_{\vec{x}}(\vec{y})$ for every atomic subformula $R(\vec{x},\vec{y})$
of $\varphi$ and let $\bar{\varphi}$ be the $\bar{\mathcal{R}}$-formula
obtained from $\varphi$ by replacing every occurrence of $R(\vec{x},\vec{y})$
with $R_{\vec{x}}(\vec{y})$. Let $M$ be a model of $\mathrm{\ensuremath{RANDOM}(\mathcal{R})}$
and let $\vec{a}\in M$. Then define an $\bar{\mathcal{R}}$-structure
on $M$ by setting $R_{\vec{x}}(\vec{y})\colon\Leftrightarrow R(\vec{a},\vec{y})$.
One can verify that $M$ satisfies the extension axioms in $\mathrm{RANDOM}(\bar{\mathcal{R}})$.
Since $\mathrm{RANDOM}(\bar{\mathcal{R}})$ is complete, $\mathrm{\ensuremath{RANDOM}(\bar{\mathcal{R}})}\vdash\bar{\varphi}$
or $\mathrm{\ensuremath{RANDOM}(\bar{\mathcal{R}})}\vdash\neg\bar{\varphi}$.
Therefore, either $\varphi(\vec{a})$ or $\neg\varphi(\vec{a})$ holds
uniformly for all $\vec{a}\in M$. Therefore, either $\mathrm{\ensuremath{RANDOM}(\mathcal{R})}\vdash\forall_{\vec{x}}\varphi(\vec{x})$
or $\mathrm{\ensuremath{RANDOM}(\mathcal{R})}\vdash\forall_{\vec{x}}\neg\varphi(\vec{x})$. 
\end{proof}

\begin{proof}[Proof of Theorem \ref{thm:QE_LFP_distribution}]
By Fact \ref{fact:LFP_FOL} and the finiteness of the vocabulary $\mathcal{V}$,
there is a finite set $G$ of extensions axioms over $\mathcal{R}$
such that there are quantifier-free $\mathcal{R}$-formulas $\phi'_{R}$
for every $R\in\mathcal{V}\backslash\mathcal{R}$ with $G\vdash\forall_{\vec{x}}\phi_{R}(\vec{x})\leftrightarrow\phi_{R}'(\vec{x})$. 

By Fact \ref{fact:asymptotic_theory_FOL}, $\underset{n\rightarrow\infty}{\lim}Q_{T}^{(n)}(\{\omega\in\Omega_{n}|\omega_{\mathcal{R}}\models G\})=1$
for any finite subset $G\subseteq\mathrm{RANDOM}(\mathcal{R})$ and
thus $\underset{n\rightarrow\infty}{\lim}Q_{T}^{(n)}(\{\omega\in\Omega_{n}|\forall_{R\in\mathcal{V}\backslash\mathcal{R}}\omega_{\mathcal{R}}\models\phi_{R}\leftrightarrow\phi'_{R}\})=1$.
Let $(Q_{T}^{(n)})$ be the family of distributions induced by the
quantifier-free $\mathrm{\mathrel{FO}}$-distribution over $\mathcal{R}$,
in which every $\phi_{R}$ is replaced by $\phi'_{R}$. By construction,
$Q^{(n)}(\omega)=Q'^{(n)}(\omega)$ for every world $\omega$ with
$\forall_{R\in\mathcal{V}\backslash\mathcal{R}}\omega_{\mathcal{R}}\models\phi_{R}\leftrightarrow\phi'_{R}$.
Therefore, $\underset{A\subseteq\Omega_{n}}{\sup}|Q^{(n)}(A)-Q'^{(n)}(A)|$
is bounded by above by $1-Q^{(n)}(\{\omega\in\Omega_{n}|\forall_{R\in\mathcal{V}\backslash\mathcal{R}}\omega_{\mathcal{R}}\models\phi_{R}\leftrightarrow\phi'_{R}\})$,
which limits to 0 since $\underset{n\rightarrow\infty}{\lim}Q^{(n)}(\{\omega\in\Omega_{n}|\forall_{R\in\mathcal{V}\backslash\mathcal{R}}\omega_{\mathcal{R}}\models\phi_{R}\leftrightarrow\phi'_{R}\})=1$.
\end{proof}

\begin{proof}[Proof of Theorem \ref{thm:Probabilistic_Logic_Programs}]
Let $\mathcal{S}\backslash\mathcal{R}$ be the extensional vocabulary
of the probabilistic logic program $\varTheta$ and let $\Pi$ be
its underlying Datalog program. Then for every relation $R\in\mathcal{S\backslash\mathcal{R}}$,
$R(\vec{t})$ is given by the Datalog formula $(\Pi,R)\vec{t}$ over
any given $\mathcal{R}$-structure. By Fact \ref{fact:S-Datalog_to_LFP},
$(\Pi,R)\vec{t}$ is equivalent to an LFP formula $\phi_{R}$ over
$\mathcal{R}$. Let $T$ be the abstract LFP distribution over $\mathcal{R}$
in which for every $R\in\mathcal{R}$, $q_{R}$ is taken from $\varTheta$
and for every $R\in\mathcal{S}\backslash\mathcal{R}$, this $\phi_{R}$
is used. Then $T$ and $\varTheta$ induce equivalent families of
distributions. By Theorem \ref{thm:QE_LFP_distribution}, $T$ is
asymptotically equivalent to a quantifier-free abstract distribution,
which in turn is equivalent to a determinate probabilistic
logic program by Fact \ref{fact:QF_to_dS-Datalog}. Therefore $\varTheta$
itself is asymptotically equivalent to a determinate probabilistic
logic program. 
\end{proof}

\begin{proof}[Proof of Proposition \ref{prop:projective_and_AE_->_Equ}]
We will proceed by contradiction. So assume not. Then there is an
$m$ such that $Q^{(m)}$ and $Q'^{(m)}$ are not equal. Let $\omega_{0}$
be a world of size $m$ which does not have the same probability in
$Q^{(m)}$ and $Q'^{(m)}$. Let $a\coloneqq|Q^{(m)}(\{\omega_{0}\})-Q'^{(m)}(\{\omega_{0}\})|$
For any $n\geq m$, consider the subset
\[
A_{n}\coloneqq\{\omega\in\Omega_{n}|\textrm{the substructure of } \omega \textrm{ on the domain } \{1, \dots, m\}\textrm{ is }\omega_{0}\}.
\]
Since both families are projective, $|Q^{(n)}(A_{n})-Q'^{(n)}(A_{n})|=|Q^{(m)}(\{\omega_{0}\})-Q'^{(m)}(\{\omega_{0}\})|=a$.
Therefore, $(Q^{(n)})$ and $(Q'^{(n)})$ are not asymptotically equivalent. 
\end{proof}

\begin{proof}[Proof of Proposition \ref{prop:AE-reducts}]
Let $(Q^{(n)})$ and $(Q^{\prime(n)})$ be asymptotically equivalent
families of distributions over $\mathcal{S}$. Then for any $\mathcal{T}\subseteq\mathcal{S}$,
and any $A\subseteq\Omega_{n}$, 
\[
|Q_{\mathcal{T}}^{(n)}(A)-Q_{\mathcal{\mathcal{T}}}^{\prime(n)}(A)|=|Q_{\mathcal{}}^{(n)}(\{\omega\in\Omega_{n}^{\mathcal{S}}|\omega_{\mathcal{T}}\in A\})-Q_{\mathcal{}}^{\prime(n)}(\{\omega\in\Omega_{n}^{\mathcal{S}}|\omega_{\mathcal{T}}\in A\})|.
\]
 Therefore, $\stackrel[n\rightarrow\infty]{}{\lim}\underset{A\subseteq\Omega_{n}^{\mathcal{T}}}{\sup}|Q_{\mathcal{T}}^{(n)}(A)-Q_{\mathcal{\mathcal{T}}}^{\prime(n)}(A)|\leq\stackrel[n\rightarrow\infty]{}{\lim}\underset{A\subseteq\Omega_{n}^{\mathcal{S}}}{\sup}|Q^{(n)}(A)-Q'^{(n)}(A)|=0$
\end{proof}

\begin{proof}[Proof of Theorem \ref{thm:Proj_implies_QF}]
  By asymptotic quantifier elimination, we can choose $T_q$ to be asymptotically equivalent to $T$.
  Since $T_q$ is a quantifier-free distribution, its induced family of distributions
  $(P^{(n)})$ is projective.
  By Proposition \ref{prop:AE-reducts}, $Q_{\mathcal{S'}}^{(n)}$ and $P_{\mathcal{S'}}^{(n)}$ are asymptotically equivalent.
  However, since they are both projective, this implies that they are actually equivalent everywhere.
\end{proof}

\begin{proof}[Proof of Proposition \ref{prop:CIP-IP}]
  Let the abstract distribution be defined over $\mathcal{R}$. Then by replacing occurrences of other relations with their quantifier-free definitions, we can reduce to the case where all formulas and structures mentioned are $\mathcal{R}$-formulas and $\mathcal{R}$-structures.
  Since by definition of the abstract distribution semantics, $\{1, \dots, n\} \models \varphi$ and $\{n+1, \dots, n+m\} \models \psi $ are independent for $\mathcal{R}$-structures, this suffices to show IP.

  To show CIP, observe first that for all atoms $R(x_1, \dots, x_k)$ and $n_1, \dots, n_k \in \{1, \dots, n-1\}$, either $\omega \models R(a_1, \dots, a_k)$ or  $\omega \models \neg R(a_1, \dots, a_k)$.
  Therefore, we can replace all occurrences of atoms with entries in $x_1, \dots x_{n-1}$ with $\top$ or $\bot$, depending on whether $\omega$ satisfies them under the substitution $x_i \rightarrow n_i$.
  As only atoms remain in which $x_n$ occurs freely, their interpretations refer to $n$ or $n+1$ in $\{1, \dots, n\}$ and $\{1, \dots, n+1\} \setminus \{n\} $ respectively. Now we can conclude as for IP above.  
\end{proof}

\begin{proof}[Proof of Corollary \ref{cor:Proj_CIP}]
  Such a projective family is in fact induced by a determinate probabilistic logic program, which is equivalent to a quantifier-free family of distributions.
\end{proof}

\begin{proof}[Proof of Corollary \ref{cor:Proj_IP}]
  $(Q^{(n)}_{\mathcal{S'}})$ satisfies IP since $(Q^{(n)})$ does and IP transfers to reducts.
\end{proof}

\begin{proof}[Proof of Proposition \ref{prop:finite_sums_of_IP}]
    As in the previous proofs, we can reduce to the situation where nullary predicates are the propositional facts. 
  Since there are only finitely many nullary predicates in $\mathcal{S}'$,
there are only finitely many possible configurations of those nullary
predicates. For every such configuration $\varphi$, let $q_{\varphi}$ be
the probability of that configuration.
Then the distribution itself is given by the finite sum of the conditional distributions on $\varphi$, weighted by  $q_{\varphi}$ , and every such conditional distribution is given by the probabilistic logic program without nullary relations obtained by substituting $\top$ or $\bot$ for the nullary propositions, depending on whether they are true or false in the configuration $\varphi$.  
\end{proof}

\begin{proof}[Proof of Proposition \ref{prop:Carnap_function}]
Assume there is such a probabilistic logic program. Since $\mathfrak{m}^{*}$ is projective,  it would have to be a finite sum of distributions satisfying IP. For each of these finite components $(P^{(n)})$, let $p_P$ be the unconditional probability of $R(x)$ for any $x$ (well-defined by projectivity).   
We observe from Equation \ref{eq:m*} that for variable $n$, the infimum of $\mathfrak{m}^{*}\left(R(a_{n+1})|\left\{ R(a_{i})\right\} _{i\in I\subseteq\{1,\ldots,n\}}\cup\left\{ \neg R(a_{i})\right\} _{i\in\{1,\ldots,n\}\backslash I}\right)$ is 0, even if we assume that there is at least one $i$ with $R(a_{i})$.
As there are only finitely many components  $(P^{(n)})$, the infimum $c$ of the nonzero $p_P$
is greater than 0. By the IP for the $(P^{(n)})$, $R(a_{n+1})$
is conditionally independent of $\left\{ R(a_{i})\right\} _{i\in I\subseteq\{1,\ldots,n\}}\cup\left\{ \neg R(a_{i})\right\} _{i\in\{1,\ldots,n\}\backslash I}$
under $(P^{(n)})$. Thus, the conditional probability of $R(a_{n+1})$ given $\left\{ R(a_{i})\right\} _{i\in I\subseteq\{1,\ldots,n\}}\cup\left\{ \neg R(a_{i})\right\} _{i\in\{1,\ldots,n\}\backslash I}$
is a weighted mean of the non-zero $p_P$ and therefore bounded
below by $c>0$, \emph{in contradiction to 0 being the infimum of
}$\mathfrak{m}^{*}\left(R(a_{n+1})|\left\{ R(a_{i})\right\} _{i\in I\subseteq\{1,\ldots,n\}}\cup\left\{ \neg R(a_{i})\right\} _{i\in\{1,\ldots,n\}\backslash I}\right)$.
\end{proof}

\section{Background and notation}

\subsection {First-order logic}
This paper follows the notation of Ebbinghaus and Flum \shortcite{EbbinghausF06}, which we outline here. 
Full information can be found in Chapter 1 there.
A \emph{vocabulary}, sometimes called a \emph{relational} vocabulary for emphasis, is a finite set of relation symbols, each of which are assigned a natural number arity, and of constant symbols, but does not contain function symbols.
We also assume an infinite set of first-order variables, customarily referred to by lower-case letters from the end of the alphabet, i. e. $u$ to $z$.   
For a vocabulary $\mathcal{S}$, an \emph{atomic $\mathcal{S}$-formula} or \emph{$\mathcal{S}$-atom} is an expression of the form $R(t_1, \dots, t_n)$, where $R$ is a relation symbol of arity $n$ and every $t_i$ is either a variable or a constant.
An \emph{$\mathcal{S}$-literal} is either an atom $\varphi$ or its negation $\neg \varphi$. 
A \emph{quantifier-free $\mathcal{S}$-formula} is a Boolean combination of atoms, where conjunction is indicated by $\wedge$, disjunction by $\vee$ and logical implication by $\rightarrow$. 
We use the big operators $\underset{}{}\bigwedge$
A \emph{first-order} or \emph{FOL-formula} is made up from atoms using Boolean connectives as well as existential and universal quantifiers over variables $x$, indicated by $\exists_x$ and $\forall_x$ respectively.  
To simplify the notation for longer strings of quantifiers, we use the shorthand $\forall_{x_1, \dots, x_n}$ for $\forall_{x_1} \dots \forall_{x_n}$, and analogously for $\exists$. 

In Section \ref{sec:LFP} we also refer to  second-order formulas, which are introduced there. 

Let $\varphi$ be a first-order formula. An occurrence of  a variable $x$ in $\varphi$ is called \emph{bound} if it is in the scope of a quantifier annotated with that variable and \emph{free} otherwise. $x$ is called \emph{free in $\varphi$} if it occurs freely. 
We use the notation $\varphi(x_1, \dots, x_n)$ for $\varphi$  to assert that every free variable in $\varphi$ is from $x_1, \dots, x_n$. 
We also abuse notation and write $R(\vec{x},\vec{c})$ for an atomic formula with constants $\vec{c}$ and free variables $\vec{x}$, even though they don't necessarily appear in that order. 
A formula with no free variables is called a \emph{sentence}, and a set of sentences is called a \emph{theory}. Sentences making up a given theory are also called its \emph{axioms}.

Since tuples such as $x_1, \dots, x_n$ occur frequently and their exact length is often not important, we use the notation $\vec{x}$ to indicate a tuple of arbitrary finite length in many contexts. 

If  $\mathcal{S}$ is a vocabulary, then an \emph{$\mathcal{S}$-structure} $\omega$ consists of a finite non-empty set $D$, the \emph{domain} of $\omega$,
along with an interpretation of the relation symbols and constants of $\mathcal{S}$ as relations and elements of $D$ respectively.
If $R$ is a relation symbol and $c$ a constant, then we $R_{\omega}$ and $c_{\omega}$ for their respective interpretations in $\omega$.
A bijective map $f:D \rightarrow D'$ between the domains of two $\mathcal{S}$-structures $\omega$ and $\omega'$ respectively is an \emph{$\mathcal{S}$-isomorphism} if it maps the interpretation of every relation symbol and constant in $\omega$ to the interpretation of the same symbol in $\omega'$. 
Given a subset $D' \subseteq D$ of the domain of a structure $\omega$, we call a structure $\omega'$ with domain $D'$ the \emph{substructure} of $\omega$ on $D'$ if the interpretation of the relation symbols in $\omega'$ are obtained by restricting the interpretations of the symbols in $\omega$ to $D'$. 
Let $\mathcal{S}' \subseteq \mathcal{S}$ be two vocabularies. 
Then the \emph{reduct} $\omega_{\mathcal{S}'}$ of an  $\mathcal{S}$-structure $\omega$ is given by simply omitting the interpretations of the symbols not in $\mathcal{S}$. In this situation, we call $\omega$ an \emph{extension} of $\omega_{\mathcal{S}'}$. 

Let $\omega$ be an  $\mathcal{S}$-structure, let $\varphi(x_1, \dots, x_n)$ be a first-order $\mathcal{S}$-formula and let $a_1, \dots, a_n$ be a tuple of elements of the domain $D$ of $\omega$.
Then we write $\omega \models \varphi(a_1, \dots, a_n)$ whenever $\varphi(x_1, \dots, x_n)$ \emph{holds} with respect to the interpretation of $a_1, \dots, a_n$ for $x_1, \dots x_n$, and call $\omega$ a \emph{model} of $\varphi(a_1, \dots, a_n)$. 
Similarly, for a theory $T$, we call $\omega$ a \emph{model} of $T$ if $\omega \models \varphi$ for all axioms $\varphi$ in $T$.  
We also express this situation by saying that $\omega$ \emph{satisfies} $\varphi$ or $T$. 

Let $T$ be a theory and $\varphi$ a sentence. We use the notation $T \vdash \varphi$ to indicate that $\omega \models \varphi$ for all models $\omega$ of $T$. 

\subsection{Logic Programming}
Our terminology for logic programs is taken from Chapter 9 of Ebbinghaus and Flum \shortcite{EbbinghausF06}, where one can find a more detailed exposition.

A \emph{general logic program} in a vocabulary $\mathcal{S}$ is a finite set $\Pi$ of clauses of the form $\gamma \leftarrow \gamma_1, \dots, \gamma_n$, where $n \geq 0$, $\gamma$ is an atomic formula and $\gamma_1, \dots, \gamma_n$ are literals. 
We call $\gamma$ the \emph{head} and $\gamma_1, \dots, \gamma_n$ the \emph{body} of the logic program. 
The \emph{intensional} relation symbols are those that occur in the head of any clause of a program, while the relation symbols occurring only in the body of clauses are called \emph{extensional}. 
We write $(\mathcal{S},\Pi)_{\mathrm{ext}}$ for the extensional vocabulary, or $\mathcal{S}_{\mathrm{ext}}$ where the logic program is clear from context. 

An \emph{acyclic} logic program is one in which no intensional relation symbol occurs in the body of any clause (This is at first glance a stronger condition than the usual definition of acyclicity, but by successively unfolding head atoms used in the body of a clause every acyclic logic program in the usual sense can easily be brought to this form). 

A \emph{pure Datalog program} is a general logic program in which no intensional relation symbol appears within any negated literal.
To affix a meaning to a pure Datalog program, consider it as a function which take as input a finite $\mathcal{S}_{\mathrm{ext}}$-structure $\omega$ and returns as output an extension $\omega_{\Pi}$ of $\omega$ to $\mathcal{S}$.
Starting from an empty interpretation of the relation symbols not in $\mathcal{S}_{\mathrm{ext}}$, we successively expand them by applying the rules of $\Pi$.  We give an informal description of this process:

Let $\gamma(\vec{x})$ be the head of a clause and $\gamma_1(\vec{x}, \vec{y}), \dots, \gamma_n(\vec{x}, \vec{y})$ the body, and let $\vec{a}$ be a tuple of elements of the domain $D$ of length equal to $\vec{x}$.
Then whenever there is a tuple $\vec{b}$ such that $\omega \models \gamma_i(\vec{a}, \vec{b})$ for every $i$, we add $\gamma(\vec{a})$ to the interpretation of the relation symbol $R$ of the atom $\gamma$. 
Successively proceed in this manner until nothing can be added by applying any of the clauses of $\Pi$. Since the domain of $\omega$ is finite, this is bound to happen eventually. 

As the restriction for no intensional relation symbol to occur negated in the body of a clause turns out to be too strong for many practical applications, we consider a generalisation to \emph{stratifiable} Datalog programs.
A general logic program $\Pi$ in a vocabulary $\mathcal{S}$ is called a \emph{stratifiable Datalog program} if there is a partition of $\mathcal{S}$ into subvocabularies $\mathcal{S}_{\mathrm{ext}}=\mathcal{S}_0, \dots \mathcal{S}_n$ such that the following holds:

The corresponding logic programs $\Pi_1, \dots, \Pi_n$, where $\Pi_i$ is defined as the set of clauses whose head atom starts with a relation symbol from $\mathcal{S}_i$, are pure Datalog programs, and the extensional vocabulary of $\Pi_i$ is contained in $\mathcal{S}_0 \cup \dots \cup \mathcal{S}_{i-1}$. 

If $\Pi$ is a stratifiable Datalog program in $\mathcal{S}$ and $\omega$ an $\mathcal{S}_{\mathrm{ext}}$-structure, then $\omega$ is obtained by applying $\Pi_1, \dots, \Pi_n$ successively. 

A \emph{(pure/stratified) Datalog formula} is an expression of the form $(\Pi, P)\vec{x}$, where $\Pi$ is a (pure/stratified) Datalog program, $P$ an intensional relation symbol and $\vec{x}$ a tuple of variables. 
We say that a Datalog formula \emph{holds} in an $\mathcal{S}_{\mathrm{ext}}$-structure $\omega$ for a tuple of elements $\vec{a}$ of the same length as $\vec{x}$, written $\omega \models (\Pi, P)\vec{a}$, if $P(\vec{a})$ is true in $\omega_{\Pi}$. 

\end{document}